\newtheorem{thm}{Theorem}[section]
\newtheorem{cor}[thm]{Corollary}
\newtheorem{claim}[thm]{Claim}
\theoremstyle{definition}
\numberwithin{equation}{section}
\newcommand{\vect}[1]{\boldsymbol{#1}}
\newcommand{\matr}[1]{\boldsymbol{#1}}
\newcommand{\mvn}[3]{\mathcal{N}\left( \vect{#1} \mid \vect{#2}, \matr{#3} \right)} 
\title{A simple and objective method for reproducible resting state network (RSN) detection in fMRI}
\author{Gautam V. Pendse\thanks{To whom correspondence should be addressed. e-mail: gpendse@mclean.harvard.edu}$^{\,\,\,1}$, David Borsook$^{\,1,2}$, and Lino Becerra$^{\,1,2}$
\mbox{}\\ \\ \\ \\ 
$^1$ P.A.I.N Group, Imaging and Analysis Group (IMAG), McLean Hospital, Harvard Medical School\\\\
$^2$ A. A. Martinos Center for Biomedical Imaging, Massachusetts General Hospital}
\date{June 15, 2011}
\begin{document}

\maketitle

\newpage 

\begingroup
\hypersetup{linkcolor=red}
\tableofcontents
\endgroup

\section*{Abstract}
Spatial Independent Component Analysis (ICA) decomposes the time by space functional MRI (fMRI) matrix into a set of 1-D basis time courses and their associated 3-D spatial maps that are optimized for mutual independence. When applied to resting state fMRI (rsfMRI), ICA produces several spatial independent components (ICs) that seem to have biological relevance - the so-called resting state networks (RSNs). The ICA problem is well posed when the true data generating process follows a linear mixture of ICs model in terms of the identifiability of the mixing matrix. However, the contrast function used for promoting mutual independence in ICA is dependent on the finite amount of observed data and is potentially non-convex with multiple local minima. Hence, each run of ICA could produce potentially different IC estimates even for the same data. One technique to deal with this run-to-run variability of ICA was proposed by \citet{Yang:2008} in their algorithm RAICAR which allows for the selection of only those ICs that have a high run-to-run reproducibility. We propose an enhancement to the original RAICAR algorithm that enables us to assign reproducibility $p$-values to each IC and allows for an objective assessment of both within subject and across subjects reproducibility. We call the resulting algorithm RAICAR-N (N stands for null hypothesis test), and we have applied it to publicly available human rsfMRI data (\url{http://www.nitrc.org}). Our reproducibility analyses indicated that many of the published RSNs in rsfMRI literature are highly reproducible. However, we found several other RSNs that are highly reproducible but not frequently listed in the literature.

\section*{Notation}
\begin{itemize}
\item Scalars variables and functions will be denoted in a non-bold font (e.g., $\sigma^2, L, p$ or $\Psi, f$). Vectors will be denoted in a bold font using lower case letters (e.g., $\vect{y}, \vect{\mu}, \vect{\eta}$). Matrices will be denoted in bold font using upper case letters (e.g., $\matr{A}, \matr{\Sigma}, \matr{W}$). The transpose of a matrix $\matr{A}$ will be denoted by $\matr{A^T}$ and its inverse will be denoted by $\matr{A^{-1}}$. $\matr{I_p}$ will denote the $p \times p$ identity matrix and $\mathbf{0}$ will denote a vector or matrix of all zeros whose size should be clear from context. ${N \choose L}$ is the number of ways of choosing $L$ objects from $N$ objects when order does not matter.

\item The $j$th component of vector $\vect{t_i}$ will be denoted by $t_{ij}$ whereas the $j$th component of vector $\vect{t}$ will be denoted by $t_{j}$. The element $(i,j)$ of matrix $\matr{G}$ will be denoted by $G(i,j)$ or $G_{ij}$. Estimates of variables will be denoted by putting a hat on top of the variable symbol. For example, an estimate of $\vect{s}$ will be denoted by $\vect{\hat{s}}$.

\item If $\vect{x}$ is a random vector with a multivariate Normal distribution with mean $\vect{\mu}$ and covariance $\matr{\Sigma}$ then we will denote this distribution by $\mvn{x}{\mu}{\Sigma}$. The joint density of vector $\vect{s}$ will be denoted by $\vect{p}_{\vect{s}}( \vect{s} )$ whereas the marginal density of $s_i$ will be denoted as $p_{s_i}( s_i )$. $\mathbf{E} \left[ f(\vect{s}, \vect{\eta} ) \right]$ denotes the expectation of $f(\vect{s}, \vect{\eta})$ with respect to both random variables $\vect{s}$ and $\vect{\eta}$.
\end{itemize}

\section{Introduction}

Independent component analysis (ICA) \citep{Jutten:1991, Comon:1994, Bell:1995, Attias:1999} models the observed data as a linear combination of a set of statistically independent and unobservable sources. \citep{McKeown:1998} first proposed the application of ICA to the analysis of functional magnetic resonance imaging (fMRI) data. Subsequently, ICA has been applied to fMRI both as an exploratory tool for the purpose of identifying task related components \citep{McKeown:1998} as well as a signal clean up tool for the purpose of removing artifacts from the fMRI data \citep{Tohka:2008}. Recently, it has been shown that ICA applied to resting state fMRI (rsfMRI) in healthy subjects reveals a set of biologically meaningful spatial maps of independent components (ICs) that are consistent across subjects - the so called resting state networks (RSNs) \citep{Beckmann:2005}. Hence, there is a considerable interest in applying ICA to rsfMRI data in order to define the set of RSNs that characterize a particular group of human subjects, a disease, or a pharmacological effect.

Several variants of the linear ICA model have been applied to fMRI data including square ICA (with equal number of sources and sensors) \citep{McKeown:1998}, non-square ICA (with more sensors than sources) \citep{Calhoun:2001}, and non-square ICA with additive Gaussian noise (noisy ICA) \citep{Beckmann:2004}. All of these models are well known in the ICA literature \citep{Jutten:1991, Cardoso:1998, Comon:1994, Attias:1999}. Since the other ICA models are specializations of the noisy ICA model, we will assume a noisy ICA model henceforth. 

Remarkably, the ICA estimation problem is well posed in terms of the identifiability of the mixing matrix given several non-Gaussian and at most 1 Gaussian source in the overall linear mixture \citep{Rao:1969, Comon:1994, Theis:2004, Davies:2004}. In the presence of more than 1 Gaussian source, such as in noisy ICA, the mixing matrix corresponding to the non-Gaussian part of the linear mixture is identifiable (upto permutation and scaling). In addition, the source distributions are uniquely identifiable (upto permutation and scaling) given a noisy ICA model with a particular Gaussian co-variance structure, for example, the isotropic diagonal co-variance. 
For details, see section \ref{identifiability}.

While these uniqueness results are reassuring, a number of practical difficulties prevent the reliable estimation of ICs on real data. These difficulties include (1) true data not describable by an ICA model, (2) ICA contrast function approximations, (3) multiple local minima in the ICA contrast function, (4) confounding Gaussian noise and (5) model order overestimation. See section \ref{run-to-run} for more details. A consequence of these difficulties is that multiple ICA runs on the same data or different subsets of the data produce different estimates of the IC realizations.

One technique to account for this run-to-run variability in ICA was proposed by \citep{Himberg:2004} in their algorithm ICASSO. Using repeated runs of ICA with bootstrapped data using various initial conditions, ICASSO clusters ICs across ICA runs using agglomerative hierarchical clustering and also helps in visualizing the estimated ICs. The logic is that reliable ICs will show up in almost all ICA runs and thus will form a tight cluster well separated from the rest. \citep{Esposito:2005} proposed a technique similar to ICASSO called self-organizing group ICA (sogICA) which allows for clustering of ICs via hierarchical clustering in across subject ICA runs. When applied to multiple ICA runs across subjects, ICASSO does not restrict the IC clusters to contain only 1 IC from each subject per ICA run. In contrast, sogICA allows the user to select the minimum number of subjects for a "group representative" IC cluster containing distinct subjects. By labelling each ICA run as a different "subject" sogICA can also be applied to analyze multiple ICA runs across subjects.

Similar in spirit to ICASSO and sogICA, \citet{Yang:2008} proposed an intuitive approach called RAICAR (Ranking and Averaging Independent Component Analysis by Reproducibility) for reproducibility analysis of estimated ICs. The basic idea in RAICAR is to select only those ICs as "interesting" or "stable" which show a high run-to-run "reproducibility". RAICAR uses simple and automated spatial cross-correlation matrix based IC alignment which has been shown to be more accurate compared to ICASSO \citep{Yang:2008}. RAICAR is applicable to both within subject as well as across subjects reproducibility analysis.

A few limitations of ICASSO, sogICA and RAICAR are worth noting:
\begin{itemize}
\item ICASSO requires the user to select the number of IC clusters and is inapplicable without modification for across subjects analysis of ICA runs since the IC clusters are not restricted to contain only 1 IC per ICA run. 
\item sogICA requires the user to select the minimum number of subjects for a "group representative" cluster and also a cutoff on within cluster distances. 
\item RAICAR uses an arbitrary threshold on the reproducibility indices selected "by eye" or set at an arbitrary value, such as $50\%$ of the maximum reproducibility value. 
\end{itemize}
We propose a simple extension to RAICAR that avoids subjective user decisions and allows for an automatic reproducibility cutoff. The reproducibility indices calculated in RAICAR differ in magnitude significantly depending on whether the input to RAICAR: 
\begin{itemize}
\item (a) is generated using multiple ICA runs on the same data
\item (b) comes from multiple ICA runs on varying data sets (e.g. between and across subject runs)
\end{itemize}

\begin{figure}[htbp]
\begin{center}
\includegraphics[width=4.5in]{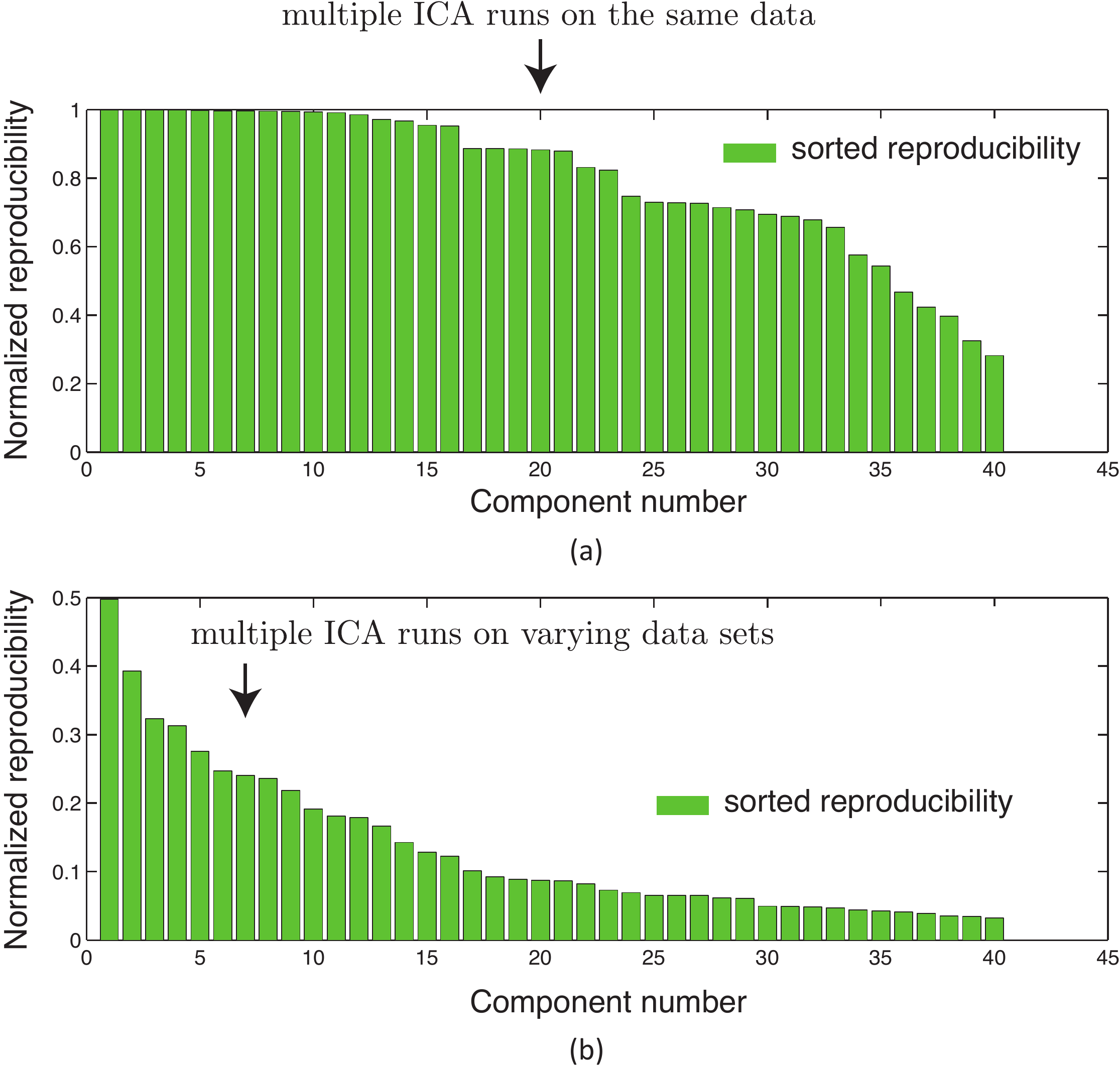}
\caption{Figure illustrates the variation in normalized reproducibility from RAICAR depending on whether the input to RAICAR is (a) Multiple ICA runs on single subject data or (b) Multiple ICA runs across subjects. Notice that the normalized reproducibility is much lower for across subjects analysis compared to within subject analysis.}
\label{figure1}
\end{center}
\end{figure}

See Figure \ref{figure1} for an illustration of this effect. Obviously, the reproducibility indices are much lower in case (b) since we account for both within subject and between subjects variability in estimating ICs. Case (b) is also of great interest from a practical point of view since we are often interested in making statements about a group of subjects. Hence, it is clear that a cutoff on RAICAR reproducibility values for the purposes of selecting the "highly reproducible" components should be data dependent. In this work, 

\begin{enumerate}
\item We propose a modification of the original RAICAR algorithm by introducing an explicit "null" model of no reproducibility. 
\item We use this "null" model to automatically generate $p$-values for each IC via simulation. This allows for an objective cutoff specification for extracting reproducible ICs (e.g. reproducible at $p < 0.05$) within and across subjects. We call the resulting algorithm RAICAR-N (N stands for "null" hypothesis test). 
\item We validate RAICAR-N by applying it to publicly available human rsfMRI data. 
\end{enumerate}

\section{Methods}
The organization of this article revolves around the following sequence of questions which ultimately lead to the development of RAICAR-N:
\begin{enumerate}
\item Why is a reproducibility assessment necessary in ICA analysis? In order to answer this question, we cover the fundamentals of ICA including identifiability issues in sections \ref{background} and \ref{ica_single_subject_and_group}.

\item How does the original RAICAR algorithm assess reproducibility? The answer to this question in section \ref{orig_raicar} will set up the stage for RAICAR-N.

\item How does RAICAR-N permit calculation of reproducibility $p$-values? In section \ref{raicar_n}, we describe the RAICAR-N "null" model and a simulation based approach for assigning $p$-values to ICs.

\item How to promote diversity in group ICA runs given a limited number of subjects when using RAICAR-N and how to display the non-Gaussian spatial structure in estimated ICs? These issues are covered in section \ref{choosingL} and \ref{non_gaussian_structure}.

\item How can RAICAR-N be extended for between group comparison of ICs and how does it compare to other approaches in the literature? This question is addressed in section \ref{raicar-n-across-groups}.
\end{enumerate}

\subsection{ICA background}\label{background}

In this section, we provide a brief introduction to ICA along with a discussion of associated issues related to model order selection, identifiability and run-to-run variability. The noisy ICA model assumes that observed data $\vect{y}$ is generated as a linear combination of unobservable independent sources confounded with Gaussian noise:
\begin{equation}\label{eq1d}
\vect{y} = \vect{\mu} + \matr{A} \, \vect{s} + \vect{\eta}
\end{equation}

In this model,
\begin{align}\label{eq2d}
\vect{y} &= p \times 1 \mbox{ observed signal vector} \\\nonumber
\vect{\mu} &= p \times 1 \mbox{ mean vector} \\\nonumber
\matr{A} &= p \times q \mbox{ mixing matrix  with $p > q$ (more sensors than sources) and rank $q$ } \\\nonumber
\vect{\eta} &= p \times 1 \mbox{ Gaussian noise vector with density } \mvn{\eta}{\vect{0}}{\matr{\Sigma}} \\\nonumber
\vect{s} &= q \times 1 \mbox{ vector of independent random variables (the ICs) } \\ \nonumber
&\mbox{ $\,\,\,\,$ with }\mathbf{E}( \vect{s}\vect{s}^T ) = \matr{D} \mbox{ (diagonal)} \mbox{ and } \mathbf{E}(\vect{s}) = \matr{0} \\ \nonumber
&\mbox{ $\,\,\,\,$ and with } \vect{s} \mbox{ and } \vect{\eta} \mbox{ independent }
\end{align}
If the marginal density of the $i$th source $s_i$ is $p_{s_i} (s_i)$ then the joint source density $\vect{p}_{\vect{s}}(\vect{s})$ factorizes as $\prod_{i=1}^q p_{s_i} (s_i)$ because of the independence assumption but is otherwise assumed to be unknown. Also, since the elements of $\vect{s}$ are independent their co-variance matrix $\matr{D}$ is diagonal. The set of variables $\mathcal{F} = \left\{ \vect{\mu}, \matr{A}, \matr{D}, \matr{\Sigma} \right\}$ represents the unknown parameters in the noisy ICA model. Before discussing the identifiability of model \ref{eq1d}, we briefly discuss the choice of model order or the assumed number of ICs $q$.
 
\subsubsection{Estimating the model order $q$}

Rigorous estimation of the model order $q$ in noisy ICA is difficult as the IC densities $p_{s_i} (s_i)$ are unknown. This means that $\vect{p} \left(\vect{y} \mid q, \mathcal{F} \right)$, the marginal density of the observed data given the model order and the ICA parameters cannot be derived in closed form (by integrating out the ICs) without making additional assumptions on the form of IC densities. Consequently, standard model selection criteria such as Bayes information criterion (BIC) \citep{Kass:1993_techrep} cannot be easily applied to the noisy ICA model to estimate $q$. One solution is to use a factorial mixture of Gaussians (MOG) joint source density model as in \citep{Attias:1999}, and use the analytical expression for $\vect{p} \left(\vect{y} \mid q, \mathcal{F} \right)$ in conjunction with BIC. This solution is quite general in terms of allowing for an arbitrary Gaussian noise co-variance $\matr{\Sigma}$, but maximizing $\vect{p} \left(\vect{y} \mid q, \mathcal{F} \right)$ with respect to $\mathcal{F}$ becomes computationally intractable using an expectation maximization (EM) algorithm for $q > 13$ ICs \citep{Attias:1999}. Another rigorous non-parametric approach for estimating $q$ that is applicable to the noisy ICA model with isotropic diagonal Gaussian noise co-variance i.e., with $\matr{\Sigma} = \sigma^2 \matr{I_p}$ is the random matrix theory based sequential hypothesis testing approach of \citet{Kritchman:2009}. To the best of our knowledge, these are the only 2 rigorous approaches for estimating $q$ in the noisy ICA model.

Approximate approaches for estimating $q$ commonly used in fMRI literature (e.g., \citep{Beckmann:2004}) consist of first relaxing the isotropic diagonal noisy ICA model (with $\matr{\Sigma} = \sigma^2 \matr{I_p}$) into a probabilistic PCA (PPCA) model of \citep{Tipping:1999} where the source densities are assumed to be Gaussian i.e., where $\vect{p}_{\vect{s}}(\vect{s}) = \mvn{\vect{s}}{\vect{0}}{\matr{I}_q}$. When using the PPCA model, it becomes possible to integrate out the Gaussian sources to get an expression for $\vect{p} \left(\vect{y} \mid q, \mathcal{F} \right)$ that can be analytically maximized \citep{Tipping:1999}. Subsequently, methods such as BIC can be applied to estimate $q$. Alternative approaches for estimating $q$ in the PPCA model consist of the Bayesian model selection of \citet{Minka:2000}, or in data-rich situations such as fMRI, even the standard technique of cross-validation \citep{Hastie:book}.

From a biological point of view, it has been argued \citep{Cole:2010} that the number of extracted ICs simply reflect the various \textit{equally valid} views of the human functional neurobiology - smaller number of ICs represent a coarse view while a larger number of ICs represent a more fine grained view. However, it is worth noting that from a statistical point of view, over-specification of $q$ will lead to over-fitting of the ICA model which might render the estimated ICs less generalizable across subjects. On the other hand, under-specification of $q$ will result in incomplete IC separation. Both of these scenarios are undesirable.

\subsubsection{Identifiability of the noisy ICA model}\label{identifiability}

To what extent is the noisy linear ICA model identifiable?
Consider a potentially different decomposition of the noisy ICA model \ref{eq1d}:
\begin{equation}\label{eq3d}
\vect{y} = \vect{\mu_1} + \matr{A_1} \, \vect{s_1} + \vect{\eta_1}
\end{equation}
where
\begin{align}\label{eq4d}
\vect{y} &= p \times 1 \mbox{ observed signal vector} \\\nonumber
\vect{\mu_1} &= p \times 1 \mbox{ mean vector} \\\nonumber
\matr{A_1} &= p \times q \mbox{ mixing matrix  with $p > q$ (more sensors than sources) and rank $q$ } \\\nonumber
\vect{\eta_1} &= p \times 1 \mbox{ Gaussian noise vector with density } \mvn{\eta_1}{\vect{0}}{\matr{\Sigma_1}} \\\nonumber
\vect{s_1} &= q \times 1 \mbox{ vector of independent random variables (the ICs) } \\ \nonumber
&\mbox{ $\,\,\,\,$ with }\mathbf{E}( \vect{s}\vect{s}^T ) = \matr{D_1} \mbox{ (diagonal)} \mbox{ and } \mathbf{E}(\vect{s}) = \matr{0} \\\nonumber
&\mbox{ $\,\,\,\,$ and with } \vect{s_1} \mbox{ and } \vect{\eta_1} \mbox{ independent }
\end{align}
What can be said about the equivalence between the parameterizations in \ref{eq1d} and \ref{eq3d}? 
\paragraph{Identifiability of $\vect{\mu}$:}
Equating the expectations of the right hand size of \ref{eq3d} and \ref{eq1d} and noting that $\vect{s},\vect{\eta}, \vect{s_1}, \vect{\eta_1}$ have mean $\vect{0}$ we get:
\begin{equation}\label{eq5d}
\vect{\mu_1} = \vect{\mu}
\end{equation}
Thus the mean vector $\vect{\mu}$ is exactly identifiable. 
\paragraph{Identifiability of $\matr{A}$:}
A fundamental decomposition result states that the noisy ICA problem is well-posed in terms of the identifiability of the mixing matrix $\matr{A}$ upto permutation and scaling provided that the components of $\vect{s}$ are independent and \textbf{non-Gaussian} \citep{Rao:1969, Comon:1994, Theis:2004, Davies:2004}. If $\matr{\Lambda}$ is a diagonal scaling matrix and $\matr{P}$ is a permutation matrix then the identifiability result can be stated as:
\begin{equation}\label{eq6d}
\matr{A_1} = \matr{A} \, \matr{\Lambda} \, \matr{P}
\end{equation}
where \ref{eq3d} is another decomposition of $\vect{y}$ with $\vect{s_1}$ containing independent and non-Gaussian components. In other words, the mixing matrix $\matr{A}$ is identifiable upto permutation and scaling.

\paragraph{Identifiability of $\matr{D}$ and $\matr{\Sigma}$:}
Equating the second moments of the right hand side of \ref{eq3d} and \ref{eq1d} and noting the equality of means \ref{eq5d} and the independence of $\vect{s}, \vect{\eta}$ and $\vect{s_1}, \vect{\eta_1}$ we get:
\begin{equation}\label{eq7d}
\matr{E}\left[ (\vect{y} - \vect{\mu}) (\vect{y} - \vect{\mu})^T \right] = \matr{A} \matr{D} \matr{A^T} + \matr{\Sigma} = \matr{A_1} \matr{D_1} \matr{A_1^T} + \matr{\Sigma_1}
\end{equation}

Let $\matr{W}$ be a $q \times p$ matrix and $\matr{\tilde{Q}}$ be a $p \times (p-q)$ orthogonal matrix such that:
\begin{align}\label{eq8d}
\matr{W} &= ( \matr{A^T} \matr{A} )^{-1} \matr{A^T} \\ \nonumber
\matr{\tilde{Q}^T} \matr{A} &= \matr{0} \\ \nonumber
\matr{\tilde{Q}^T} \matr{\tilde{Q}} &= \matr{I_{p-q}} 
\end{align}
From \ref{eq8d} and \ref{eq7d} we get:
\begin{align}\label{eq9d}
\matr{D} + \matr{W} \matr{\Sigma} \matr{W^T} &= \matr{\Lambda} \, \matr{P} \matr{D_1} \matr{P^T} \matr{\Lambda^T} + \matr{W} \matr{\Sigma_1} \matr{W^T} \\ \nonumber
\matr{\tilde{Q}^T} \matr{\Sigma} \matr{\tilde{Q}} &= \matr{\tilde{Q}^T} \matr{\Sigma_1} \matr{\tilde{Q}} 
\end{align}

\textbf{Case 1: $\matr{\Sigma} = \sigma^2 \matr{I_p}$ and $\matr{\Sigma_1} = \sigma_1^2 \matr{I_p}$} \label{isotropic}\\
The second equation in \ref{eq9d} along with the orthogonality of $\matr{\tilde{Q}}$ gives $\sigma^2 = \sigma_1^2$ and thus $\matr{\Sigma} = \matr{\Sigma_1}$. If we fix the scaling of $\matr{A_1}$ by selecting $\matr{\Lambda^2} = \matr{I_q}$ then from the first equation in \ref{eq9d} we get:
\begin{align}\label{eq9d_1} 
\matr{D} &= \matr{\Lambda} \, \matr{P} \matr{D_1} \matr{P^T} \matr{\Lambda^T} \\ \nonumber
                &=  \matr{P} \matr{D_1} \matr{P^T} \matr{\Lambda^2}  && \text{($\matr{P} \matr{D_1} \matr{P^T}$ is diagonal)} \\ \nonumber 
                &= \matr{P} \matr{D_1} \matr{P^T} 
\end{align}
 In other words, the noise co-variance $\matr{\Sigma} = \sigma^2 \matr{I_p}$ is uniquely determined and for a fixed scaling $\matr{\Lambda^2} = \matr{I_q}$, the source variances $\matr{D}$ are also uniquely determined upto permutation.

\textbf{Case 2: $\matr{\Sigma}$ and $\matr{\Sigma_1}$ arbitrary positive definite matrices}\\
Suppose $\matr{X}$ is a square matrix and let $\mbox{diag}(\matr{X})$ be the diagonal matrix obtained by setting the non-diagonal elements of $\matr{X}$ to $0$ and similarly let $\mbox{offdiag}(\matr{X})$ be the matrix obtained by setting the diagonal elements of $\matr{X}$ to 0. The noise-covariance is partially identifiable by the following conditions:
\begin{align}\label{eq10d}
\matr{\tilde{Q}^T} \matr{\Sigma} \matr{\tilde{Q}} &= \matr{\tilde{Q}^T} \matr{\Sigma_1} \matr{\tilde{Q}}  \\ \nonumber
\mbox{offdiag} \left( \matr{W} \matr{\Sigma} \matr{W^T} \right) &= \mbox{offdiag} \left( \matr{W} \matr{\Sigma_1} \matr{W^T} \right) 
\end{align}
For a fixed scaling $\matr{\Lambda^2} = \matr{I_q}$, the sources variances $\matr{D}, \matr{D_1}$ are constrained by:
\begin{equation}\label{eq11d}
\matr{D} + \mbox{diag}( \matr{W} \matr{\Sigma} \matr{W^T} ) = \matr{P} \matr{D_1} \matr{P^T}  + \mbox{diag}( \matr{W} \matr{\Sigma_1} \matr{W^T}  ) \\
\end{equation}

In general, the source variances $\matr{D}$ cannot be uniquely determined as noted in \citep{Davies:2004}.

\paragraph{Identifiability of the distribution of $\vect{s}$:}
Is the distribution of the non-Gaussian components of $\vect{s}$ identifiable? From \ref{eq1d} and \ref{eq3d}:
\begin{equation}\label{eq12d}
\vect{\mu} + \matr{A} \, \vect{s} + \vect{\eta} = \vect{\mu_1} + \matr{A_1} \, \vect{s_1} + \vect{\eta_1}
\end{equation}
Substituting \ref{eq5d} and \ref{eq6d} in \ref{eq12d} we get:
\begin{equation}\label{eq13d}
\matr{A} \, \vect{s} + \vect{\eta} = \matr{A} \, \matr{\Lambda} \, \matr{P} \, \vect{s_1} + \vect{\eta_1}
\end{equation} 
Premultiplying both sides by $\matr{W}$ from \ref{eq8d} we get:
\begin{equation}\label{eq14d}
\vect{s} + \matr{W} \vect{\eta} = \matr{\Lambda} \, \matr{P} \, \vect{s_1} + \matr{W} \vect{\eta_1}
\end{equation}
Let $\Psi_{\vect{s}}, \Psi_{\vect{\eta}}, \Psi_{\vect{s_1}}, \Psi_{\vect{\eta_1}}$ be the characteristic functions of $\vect{s}, \vect{\eta}, \vect{s_1}$ and $\vect{\eta_1}$ respectively. Then
\begin{align}\label{eq15d}
\Psi_{\vect{s}}( \vect{t} ) &= \mathbf{E}\left[ \mbox{exp} \left(i \vect{t^T} \vect{s}\right) \right]  & \Psi_{\vect{\eta}}( \vect{t} ) &= \mathbf{E}\left[ \mbox{exp} \left(i \vect{t^T} \vect{\eta}\right) \right] \\ \nonumber
\Psi_{\vect{s_1}}( \vect{t} ) &= \mathbf{E}\left[ \mbox{exp} \left(i \vect{t^T} \vect{s_1}\right) \right] & \Psi_{\vect{\eta_1}}( \vect{t} ) &= \mathbf{E}\left[ \mbox{exp} \left(i \vect{t^T} \vect{\eta_1}\right) \right] \\ \nonumber
\end{align}
where $i = \sqrt{-1}$ and $\vect{t}$ is a vector of real numbers of length equal to that of the corresponding random vectors in \ref{eq15d}. Using \ref{eq14d}, we can write:
\begin{align}\label{eq16d}
\mathbf{E} \left[  \mbox{exp} \left( i \vect{t^T} \left\{ \vect{s} + \matr{W} \vect{\eta} \right\} \right) \right] = \mathbf{E} \left[ \mbox{exp} \left( i \vect{t^T} \left\{\matr{\Lambda} \, \matr{P} \, \vect{s_1} + \matr{W} \vect{\eta_1}\right\} \right) \right] \mbox{ for all } \vect{t} \in \mathbf{R^q}
\end{align}
Noting the independence of $\vect{s}, \vect{\eta}$ and $\vect{s_1}, \vect{\eta_1}$:
\begin{align}\label{eq17d}
\mathbf{E} \left[  \mbox{exp} \left( i \vect{t^T} \left\{ \vect{s} \right\} \right) \right] \, \mathbf{E} \left[  \mbox{exp} \left( i \vect{t^T} \left\{ \matr{W} \vect{\eta} \right\} \right) \right] &= \mathbf{E} \left[ \mbox{exp} \left( i \vect{t^T} \left\{\matr{\Lambda} \, \matr{P} \, \vect{s_1} \right\} \right) \right] \, \mathbf{E} \left[ \mbox{exp} \left( i \vect{t^T} \left\{ \matr{W} \vect{\eta_1}\right\} \right) \right] \\ \nonumber
\Rightarrow \Psi_{\vect{s}} \left( \vect{t} \right) \,\, \Psi_{\vect{\eta}} \left( \matr{W^T} \vect{t} \right) &= \Psi_{\vect{s_1}} \left( \matr{P^T} \matr{\Lambda^T} \vect{t} \right) \,\, \Psi_{\vect{\eta_1}} \left( \matr{W^T} \vect{t} \right) \mbox{ for all } \vect{t} \in \mathbf{R^q}
\end{align}
Now $\vect{\eta}$ and $\vect{\eta_1}$ are multivariate Gaussian random vectors both with mean $\vect{0}$ and co-variance matrix $\matr{\Sigma}$ and $\matr{\Sigma_1}$ respectively. Hence, their characteristic functions are given by \citep{Feller:book, Bryc:1995}:
\begin{align}\label{eq18d}
\Psi_{\vect{\eta}} \left( \matr{W^T} \vect{t} \right) = \mbox{ exp } \left( -\frac{1}{2} \vect{t^T} \matr{W} \matr{\Sigma} \matr{W^T} \vect{t} \right) \mbox{ for all } \vect{t} \in \mathbf{R^q} \\\nonumber
\Psi_{\vect{\eta_1}} \left( \matr{W^T} \vect{t} \right) = \mbox{ exp } \left( -\frac{1}{2} \vect{t^T} \matr{W} \matr{\Sigma_1} \matr{W^T} \vect{t} \right) \mbox{ for all } \vect{t} \in \mathbf{R^q}
\end{align}

\begin{claim}\label{source_identifiability}
A sufficient condition for identifiability upto permutation and scaling of the non-Gaussian distributions in $\vect{s}$ given two different parameterizations in \ref{eq1d} and \ref{eq3d} is:
\begin{align}\label{eq19d}
\mbox{diag}( \matr{W} \matr{\Sigma} \matr{W^T} ) = \mbox{diag} ( \matr{W} \matr{\Sigma_1} \matr{W^T} )
\end{align}
\end{claim}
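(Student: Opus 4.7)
The plan is to start from the characteristic-function identity (17d), divide out the Gaussian factors from (18d) to isolate a purely quadratic correction, exploit the independence of the components of $\vect{s}$ and $\vect{s_1}$ to factor each joint characteristic function as a product of marginals, and then probe the resulting identity along the coordinate axes of $\vect{t}$. Along coordinate axes the multivariate identity collapses into $q$ independent scalar identities, one per source, each controlled by a single diagonal entry of $\matr{W}(\matr{\Sigma}-\matr{\Sigma_1})\matr{W^T}$.

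Concretely, since Gaussian characteristic functions are nowhere zero, dividing (17d) by $\Psi_{\vect{\eta}}(\matr{W^T}\vect{t})$ and using (18d) yields
\begin{equation*}
\Psi_{\vect{s}}(\vect{t}) \;=\; \Psi_{\vect{s_1}}(\matr{P^T}\matr{\Lambda^T}\vect{t})\,\exp\!\left(\tfrac{1}{2}\,\vect{t^T}\matr{N}\vect{t}\right), \qquad \matr{N} := \matr{W}(\matr{\Sigma}-\matr{\Sigma_1})\matr{W^T}.
\end{equation*}
Independence gives $\Psi_{\vect{s}}(\vect{t}) = \prod_i \Psi_{s_i}(t_i)$ and $\Psi_{\vect{s_1}}(\vect{u}) = \prod_i \Psi_{s_{1,i}}(u_i)$. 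Because $\matr{P^T}\matr{\Lambda^T}$ is a permutation composed with a diagonal scaling, there is a permutation $\pi$ of indices and scalars $\lambda_1,\ldots,\lambda_q$ (the diagonal entries of $\matr{\Lambda}$) such that the $\pi^{-1}(k)$th component of $\matr{P^T}\matr{\Lambda^T}\vect{t}$ equals $\lambda_k t_k$.

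I then substitute $\vect{t} = t_k \vect{e_k}$. Since $\Psi(0) = 1$, both products collapse to a single nontrivial factor and the quadratic form reduces to $\tfrac{1}{2} N_{kk} t_k^2$, giving, for each $k$ and every $t_k \in \mathbf{R}$,
\begin{equation*}
\Psi_{s_k}(t_k) \;=\; \Psi_{s_{1,\pi^{-1}(k)}}(\lambda_k t_k)\,\exp\!\left(\tfrac{1}{2}\,N_{kk}\,t_k^2\right).
\end{equation*}
The hypothesis $\mbox{diag}(\matr{W}\matr{\Sigma}\matr{W^T}) = \mbox{diag}(\matr{W}\matr{\Sigma_1}\matr{W^T})$ is precisely the statement that $N_{kk} = 0$ for all $k$, so the Gaussian factor disappears and $\Psi_{s_k}(t_k) = \Psi_{s_{1,\pi^{-1}(k)}}(\lambda_k t_k)$. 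By uniqueness of characteristic functions, $s_k$ has the same distribution as $\lambda_k s_{1,\pi^{-1}(k)}$, which is exactly identifiability of the source distributions up to the permutation $\pi$ and the scalings $\lambda_k$.

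The main obstacle I anticipate is bookkeeping rather than mathematics: keeping the correspondence between $\matr{P}$, $\matr{\Lambda}$, and the induced permutation $\pi$ and scalars $\lambda_k$ consistent, and tracking which component of $\vect{s_1}$ ends up identified with $s_k$. A tempting but trickier alternative is to take logs of the characteristic functions, linearise the equation, and invoke a Kac--Darmois/Marcinkiewicz-type argument on the decomposition of a quadratic form into univariate summands; this route would force one to worry about zeros and multi-valued branches of $\log \Psi$. Probing at coordinate axes sidesteps those analytic issues entirely. Although the claim is stated only for the non-Gaussian components of $\vect{s}$, the argument in fact identifies every marginal; this is consistent because the identifiability of $\matr{A}$ in (6d) that underlies (17d) already requires at most one Gaussian source.
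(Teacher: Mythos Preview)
Your proof is correct and follows essentially the same route as the paper: divide the characteristic-function identity (2.17) by the nowhere-vanishing Gaussian factors from (2.18), factor both sides via independence, and probe along coordinate axes to extract the marginal identities. The only cosmetic difference is that the paper first combines the diagonal hypothesis (2.19) with the off-diagonal identity (2.10) to obtain $\matr{W}\matr{\Sigma}\matr{W^T}=\matr{W}\matr{\Sigma_1}\matr{W^T}$ and cancel the Gaussian factors globally before probing, whereas you carry the residual $\exp\!\bigl(\tfrac{1}{2}\vect{t^T}\matr{N}\vect{t}\bigr)$ and invoke the diagonal hypothesis only at the coordinate-axis step where $N_{kk}$ alone survives; your version is slightly more economical since it does not need to cite (2.10).
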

\begin{proof}
From \ref{eq19d} and \ref{eq10d}, we get:
\begin{align}\label{eq20d}
\matr{W} \matr{\Sigma} \matr{W^T} =  \matr{W} \matr{\Sigma_1} \matr{W^T}
\end{align}
Thus from \ref{eq18d},
\begin{align}\label{eq21d}
\Psi_{\vect{\eta}} \left( \matr{W^T} \vect{t} \right) = \Psi_{\vect{\eta_1}} \left( \matr{W^T} \vect{t} \right) \mbox{ for all } \vect{t} \in \mathbf{R^q}
\end{align}
From \ref{eq18d}, $\Psi_{\vect{\eta}} \left( \matr{W^T} \vect{t} \right)$ and $\Psi_{\vect{\eta_1}} \left( \matr{W^T} \vect{t} \right)$ are not equal to 0 for any finite $\vect{t}$, therefore, from \ref{eq21d} and \ref{eq17d} we get:
\begin{align}\label{eq22d}
\Psi_{\vect{s}} \left( \vect{t} \right) \,\, &= \Psi_{\vect{s_1}} \left( \matr{P^T} \matr{\Lambda^T} \vect{t} \right) \mbox{ for all } \vect{t} \in \mathbf{R^q}
\end{align}
Note that $\matr{\Lambda}$ is a diagonal scaling matrix with entries $\lambda_1, \lambda_2, \ldots, \lambda_q$ on the diagonal and $\matr{P}$ is a permutation matrix. Thus, 
\begin{align}\label{eq23d}
\matr{P}^T \matr{\Lambda^T} \vect{t} = \begin{pmatrix} \lambda_{i_1} t_{i_1} \\ \lambda_{i_2} t_{i_2} \\ \vdots \\ \lambda_{i_q} t_{i_q} \end{pmatrix}
\end{align}
where $i_1, i_2, \ldots, i_q$ is some permutation of integers $1,2,\ldots,q$. Suppose $\Psi_{\vect{s}(j)}$ is the characteristic function of the $j$th component of $\vect{s}$ and $\Psi_{\vect{s_1}(j)}$ is the characteristic function of the $j$th component of $\vect{s_1}$. Since the components of $\vect{s}$ and $\vect{s_1}$ are independent by assumption, the joint characteristic functions $\Psi(\vect{s})$ and $\Psi(\vect{s_1})$ factorize:

\begin{align}\label{eq24d}
\Psi_{\vect{s}} \left( \vect{t} \right) &= \Psi_{\vect{s}(1)}(t_1) \, \Psi_{\vect{s}(2)}(t_2) \ldots \Psi_{\vect{s}(j)}(t_j) \ldots \Psi_{\vect{s}(q)}(t_q) \\ \nonumber
\Psi_{\vect{s_1}} \left( \matr{P^T} \matr{\Lambda^T} \vect{t} \right) &=  \Psi_{\vect{s_1}(1)}( \lambda_{i_1} t_{i_1} ) \Psi_{\vect{s_1}(2)}( \lambda_{i_2} t_{i_2} ) \ldots \Psi_{\vect{s_1}(j)}( \lambda_{i_j} t_{i_j} ) \ldots \Psi_{\vect{s_1}(q)}( \lambda_{i_q} t_{i_q} )  
\end{align}

From \ref{eq24d} and \ref{eq22d}
\begin{align}\label{eq25d}
\Psi_{\vect{s}(1)}(t_1) \ldots \Psi_{\vect{s}(j)}(t_j) \ldots \Psi_{\vect{s}(q)}(t_q) &= \Psi_{\vect{s_1}(1)}( \lambda_{i_1} t_{i_1} ) \ldots \Psi_{\vect{s_1}(j)}( \lambda_{i_j} t_{i_j} ) \ldots \Psi_{\vect{s_1}(q)}( \lambda_{i_q} t_{i_q} )  
\end{align}

All characteristic functions satisfy \citep{Feller:book, Bryc:1995}:
\begin{align}\label{eq26d}
\Psi_{\vect{s}(k)}(0) &= 1\\ \nonumber
\Psi_{\vect{s_1}(k)}(0) &= 1 \mbox{ for all $k$ }
\end{align}

Since $i_1,i_2,\ldots,i_q$ is simply a permutation of integers $1,2,\ldots,q$, there exists a $j$ such that $i_j = 1$. Then set $t_2 = 0, t_3 = 0, \ldots, t_q = 0$ in \ref{eq25d}. Then \ref{eq26d} and \ref{eq25d} imply:
\begin{align}\label{eq27d}
\Psi_{\vect{s}(1)}(t_1) &= \Psi_{\vect{s_1}(j)}( \lambda_{i_j} t_{i_j} ) =  \Psi_{\vect{s_1}(j)}( \lambda_{1} t_{1} ) \mbox{ for all } t_1 \in \mathbf{R}
\end{align}

Select the scaling matrix as $\matr{\Lambda^2} = \matr{I_q}$ and thus $\matr{\Lambda}$ is a diagonal matrix with elements $\pm 1$ on the diagonal. Thus $\lambda_1 = \pm 1$ and \ref{eq27d} can be re-written as:
\begin{align}\label{eq28d}
\Psi_{\vect{s}(1)}(t_1) =  \Psi_{\vect{s_1}(j)}( \pm t_{1} ) \mbox{ for all } t_1 \in \mathbf{R} 
\end{align}
Therefore,
\begin{align}\label{eq28d_1}
\Psi_{\vect{s}(1)}(t_1) &=  \Psi_{\vect{s_1}(j)}( t_{1} ) \mbox{ for all } t_1 \in \mathbf{R} \\ \nonumber
\mbox{or} &\\ \nonumber
\Psi_{\vect{s}(1)}(t_1) &=  \Psi_{\vect{s_1}(j)}( - t_{1} ) = \Psi_{-\vect{s_1}(j)}( t_{1} ) \mbox{ for all } t_1 \in \mathbf{R}
\end{align}

Hence the characteristic function of the $1$st component of $\vect{s}$ is identical to the characteristic function of the (possibly sign-flipped) $j$th component of $\vect{s_1}$. Since characteristic functions uniquely characterize a probability distribution \citep{Feller:book}, the distribution of $\vect{s}(1)$ and $\pm \vect{s_1}(j)$ is identical. Next, by setting $t_1 = 0, t_3 = 0, \ldots, t_q = 0$, we can find a distribution from $\vect{s_1}$ that matches the $2$nd component $\vect{s}(2)$ of $\vect{s}$. Proceeding in a similar fashion, it is clear that the distribution of each component of $\vect{s}$ is uniquely identifiable upto sign flips for the choice $\matr{\Lambda^2} = \matr{I_q}$. For a general $\matr{\Lambda}$, the source distributions are uniquely identifiable upto permutation and (possibly negative) scaling, as claimed.
\end{proof} 

While the source distributions might not be uniquely identifiable for arbitrary co-variance matrices $\matr{\Sigma}$, they are indeed uniquely identifiable upto permutation and scaling for the noisy ICA model with isotropic Gaussian noise co-variance. For more general conditions that guarantee uniqueness of source distributions, please see \cite{Eriksson:2004, Eriksson:2006}.\\

\begin{cor}
If $\matr{\Sigma} = \sigma^2 \matr{I_p}$ and $\matr{\Sigma_1} = \sigma_1^2 \matr{I_p}$, then the source distributions are uniquely identifiable upto sign flips for $\matr{\Lambda^2} = \matr{I_q}$.
\end{cor}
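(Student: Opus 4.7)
The plan is to invoke Claim \ref{source_identifiability} directly, after observing that the isotropic hypothesis forces the two noise covariances to coincide. First I would recall from Case 1 (the paragraph following equation \ref{eq9d}) that when both $\matr{\Sigma} = \sigma^2 \matr{I_p}$ and $\matr{\Sigma_1} = \sigma_1^2 \matr{I_p}$, the second equation in \ref{eq9d} together with $\matr{\tilde{Q}^T}\matr{\tilde{Q}} = \matr{I_{p-q}}$ gives $\sigma^2 = \sigma_1^2$, so in fact $\matr{\Sigma} = \matr{\Sigma_1}$.

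Next I would observe that this equality of noise covariances immediately yields
\begin{equation*}
\matr{W}\matr{\Sigma}\matr{W^T} = \matr{W}\matr{\Sigma_1}\matr{W^T},
\end{equation*}
and hence, trivially, $\mbox{diag}(\matr{W}\matr{\Sigma}\matr{W^T}) = \mbox{diag}(\matr{W}\matr{\Sigma_1}\matr{W^T})$. This is precisely the sufficient condition \ref{eq19d} in Claim \ref{source_identifiability}.

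Applying Claim \ref{source_identifiability} then gives identifiability of the non-Gaussian components of $\vect{s}$ upto permutation and (possibly negative) scaling by the diagonal matrix $\matr{\Lambda}$. Specializing to the scaling fixed by $\matr{\Lambda^2} = \matr{I_q}$ (which is the scaling choice used throughout Case 1), each diagonal entry of $\matr{\Lambda}$ satisfies $\lambda_k = \pm 1$, so the scaling ambiguity collapses to sign flips alone, as claimed.

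I do not anticipate any real obstacle: the corollary is essentially a packaging statement that combines the isotropic-noise uniqueness of $\matr{\Sigma}$ from Case 1 with the sufficient condition already proved in Claim \ref{source_identifiability}. The only thing to be careful about is the scaling convention; once $\matr{\Lambda^2} = \matr{I_q}$ is fixed, the remaining sign indeterminacy is intrinsic and cannot be removed, since for any non-Gaussian source $s_i$, replacing $s_i$ by $-s_i$ and the corresponding column of $\matr{A}$ by its negative yields an equally valid decomposition.
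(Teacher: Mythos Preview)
Your proposal is correct and follows essentially the same route as the paper: use the isotropic hypothesis together with the second equation in \ref{eq9d} to conclude $\matr{\Sigma}=\matr{\Sigma_1}$, observe that this trivially gives the diagonal condition \ref{eq19d}, and then invoke Claim \ref{source_identifiability} with the scaling choice $\matr{\Lambda^2}=\matr{I_q}$. The paper's proof is simply a more compressed version of what you wrote.
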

\begin{proof}
Suppose $\matr{\Sigma} = \sigma^2 \matr{I_p}$ and $\matr{\Sigma_1} = \sigma_1^2 \matr{I_p}$. Then from \ref{eq9d} $\matr{\Sigma} = \matr{\Sigma_1}$ and thus $\mbox{diag}( \matr{W} \matr{\Sigma} \matr{W^T} ) = \mbox{diag} ( \matr{W} \matr{\Sigma_1} \matr{W^T} )$. The corollary then follows from Claim \ref{source_identifiability}.
\end{proof}

\begin{cor}
If $\matr{D} = \matr{D_1} = \matr{I_q}$, then the source distributions are uniquely identifiable up to sign flips for $\matr{\Lambda^2} = \matr{I_q}$.
\end{cor}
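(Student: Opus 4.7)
The plan is to reduce this to Claim \ref{source_identifiability} by verifying its sufficient condition, namely that $\mbox{diag}( \matr{W} \matr{\Sigma} \matr{W^T} ) = \mbox{diag} ( \matr{W} \matr{\Sigma_1} \matr{W^T} )$ holds under the hypothesis $\matr{D} = \matr{D_1} = \matr{I_q}$ together with the choice of scaling $\matr{\Lambda^2} = \matr{I_q}$.

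The key tool is equation \ref{eq11d}, derived earlier from equating second moments under the two parameterizations, which states
\begin{equation*}
\matr{D} + \mbox{diag}( \matr{W} \matr{\Sigma} \matr{W^T} ) = \matr{P} \matr{D_1} \matr{P^T} + \mbox{diag}( \matr{W} \matr{\Sigma_1} \matr{W^T} ).
\end{equation*}
First I would substitute $\matr{D} = \matr{I_q}$ and $\matr{D_1} = \matr{I_q}$. Since $\matr{P}$ is a permutation matrix and therefore orthogonal, $\matr{P} \matr{I_q} \matr{P^T} = \matr{P} \matr{P^T} = \matr{I_q}$. Cancelling the identity matrix from both sides yields the required equality of the diagonal parts of $\matr{W} \matr{\Sigma} \matr{W^T}$ and $\matr{W} \matr{\Sigma_1} \matr{W^T}$.

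Having verified the hypothesis of Claim \ref{source_identifiability}, I would invoke the claim directly to conclude that the non-Gaussian source distributions in $\vect{s}$ are identifiable up to permutation and (possibly negative) scaling. Finally, under the additional restriction $\matr{\Lambda^2} = \matr{I_q}$, the diagonal entries of $\matr{\Lambda}$ lie in $\{+1, -1\}$, so the scaling collapses to sign flips, exactly as in the passage from \ref{eq27d} to \ref{eq28d_1} inside the proof of the claim. This gives the desired conclusion.

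There is no real obstacle here; the whole content is that imposing unit source variances forces the cross-term $\matr{P} \matr{D_1} \matr{P^T} - \matr{D}$ in \ref{eq11d} to vanish, after which Claim \ref{source_identifiability} does all the work. The only point requiring a bit of care is noting that $\matr{P} \matr{I_q} \matr{P^T} = \matr{I_q}$ (i.e., that permuting an identity matrix leaves it unchanged) so that the two identity terms cancel cleanly without disturbing the diagonal of $\matr{W} \matr{\Sigma} \matr{W^T}$.
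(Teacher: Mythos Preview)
Your proposal is correct and matches the paper's own proof essentially line for line: both substitute $\matr{D} = \matr{D_1} = \matr{I_q}$ into \ref{eq11d}, use $\matr{P}\matr{P}^T = \matr{I_q}$ to collapse $\matr{P}\matr{D_1}\matr{P}^T$ to $\matr{I_q}$, cancel to obtain $\mbox{diag}(\matr{W}\matr{\Sigma}\matr{W^T}) = \mbox{diag}(\matr{W}\matr{\Sigma_1}\matr{W^T})$, and then invoke Claim~\ref{source_identifiability}.
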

\begin{proof}
If $\matr{D} = \matr{D_1} = \matr{I_q}$, then noting that $\matr{P} \matr{P}^T = \matr{I_q}$, we get $\matr{D} = \matr{P} \matr{D_1} \matr{P^T}$. Hence from \ref{eq11d}, we get $\mbox{diag}( \matr{W} \matr{\Sigma} \matr{W^T} ) = \mbox{diag} ( \matr{W} \matr{\Sigma_1} \matr{W^T} )$. The corollary then follows from Claim \ref{source_identifiability}.
\end{proof}

\subsubsection{Why is there a run-to-run variability in estimated ICs?} \label{run-to-run}
From the discussion in section \ref{identifiability}, it is clear that for a noisy ICA model with isotropic diagonal additive Gaussian noise co-variance:
\begin{enumerate}
\item  The noisy ICA parameters $\mathcal{F} = \left\{ \vect{\mu}, \matr{A}, \matr{D}, \matr{\Sigma} \right\}$ are uniquely identifiable up to permutation and scaling.
\item The source distributions in $\vect{s}$ are uniquely identifiable upto permutation and scaling.
\end{enumerate}
While the above theoretical properties of ICA are reassuring, there are a number of practical difficulties that prevent the reliable estimation of ICs on real data: 
\begin{enumerate}
\item \textbf{Validity of the ICA model:} 

The assumption that the observed real data is generated by an ICA model is only that - an "assumption". If this assumption is not valid, then the uniqueness results do not hold anymore.

\item \textbf{Mutual information approximations:} 

From an information theoretic point of view, the ICA problem is solved by minimizing a contrast function which is an \textit{approximation} to the mutual information \citep{negentropy:1998} between the ICs that depends on the finite amount of observed data. Such an approximation is necessary, since we do not have access to the marginal source densities $p_{s_i}$. Different approximations to mutual information will lead to different objective functions and hence different solutions. This is one of the reasons why different ICA algorithms often produce different IC estimates even for the same data.

\item \textbf{Non-convexity of ICA objective functions:} 

The ICA contrast function is potentially non-convex and hence has multiple local minima. Since global minimization is a challenging problem by itself, most ICA algorithms will only converge to local minima of the ICA contrast function. The run-to-run variability of IC estimates will also depend on the number of local minima in a particular ICA contrast function.

\item \textbf{IC estimate corruption by Gaussian noise:} 

For noisy ICA, the IC realizations cannot be recovered exactly even if the true mixing matrix $\matr{A}$ and mean vector $\vect{\mu}$ are known in \ref{eq1d}. Commonly used estimators for recovering realization of ICs include the least squares \citep{Beckmann:2004} as well as the minimum mean square error (MMSE) \citep{Davies:2004}. Consider the least squares estimate $\vect{\hat{s}}$ of a realization of $\vect{s}$ based on $\vect{y}$:
\begin{align}\label{eq29d}
\vect{\hat{s}} = ( \matr{A^T} \matr{A} )^{-1} \matr{A^T} (\vect{y} - \vect{\mu}) = \vect{s} + ( \matr{A^T} \matr{A} )^{-1} \matr{A^T} \vect{\eta}
\end{align}
This means that even for known parameters, IC realization estimates $\vect{\hat{s}}$ will be corrupted by correlated Gaussian noise. Hence using different subsets of the data under the true model will also lead to variability in estimated ICs.

\item \textbf{Over-fitting of the ICA model:} 

Over specification of the model order leads to the problem of over-fitting in ICA. As we describe below, this can lead to (1) the phenomenon of IC "splitting" and (2) an increase in the variance of the IC estimates.

\textit{\textbf{1. IC "splitting"}}
 
Suppose that the true model order or the number of non-Gaussian sources in an ICA decomposition of $\vect{y}$ such as \ref{eq1d} is $q$. Then a fundamental result in \citep[Theorem 1]{Rao:1969} states that for any other ICA decomposition of $\vect{y}$, the number of non-Gaussian sources remains the same while the number of Gaussian sources can change. In other words, $\vect{y}$ cannot have two different ICA decompositions containing different number of non-Gaussian sources. 

In view of this fact, how can a model order $q$ ICA decomposition containing $q$ non-Gaussian sources be "split" into a $(q+1)$ ICA decomposition containing $(q+1)$ non-Gaussian sources when performing ICA estimation using an assumed model order of $(q+1)$? As we describe below, the order $(q+1)$ ICA decomposition is only an \textit{approximation} to the order $q$ ICA decomposition.

Let $\vect{a_i}$ be the $i$th column of $\matr{A}$ in \ref{eq1d}. In the presence of noise, it might be possible to \textit{approximate}: 
\begin{equation}\label{eq1gg}
\vect{a_i} s_i \approx \vect{a_i^1} s_i^1 + \vect{a_i^2} s_i^2
\end{equation}
Here:
\begin{itemize}
\item $\vect{a_i} s_i$ is the contribution of the $i$th non-Gaussian source $s_i$ to the ICA model \ref{eq1d}.
\item $s_i^1$ and $s_i^2$ are independent non-Gaussian random variables that are also independent with respect to all non-Gaussian sources $s_j, j \neq i$ in \ref{eq1d}.
\item $\vect{a_i^1}$ and $\vect{a_i^2}$ are the basis time courses corresponding to $s_i^1$ and $s_i^2$ respectively.
\item The time courses $\vect{a_i^1}$ and $\vect{a_i^2}$ look similar to each other.
\end{itemize}
Note that if $\vect{a_i^1} = \vect{a_i^2}$, then \ref{eq1gg} can be made into an equality by choosing $s_i = s_i^1 + s_i^2$. By replacing $\vect{a_i} s_i$ in \ref{eq1d} using \ref{eq1gg}, we arrive at an \textit{approximate} model order $(q+1)$ decomposition of $\vect{y}$. In this decomposition, the component $s_i$ from a model order $q$ decomposition appears to be "split" into two sub-components: $s_i^1$ and $s_i^2$.

\textit{\textbf{2. Inflated variance of IC estimates}}

Overestimation of model order will lead to over-fitting of the mixing matrix $\matr{A}$. In other words, $\matr{A}$ could have several columns that are highly correlated with each other. This could happen as a result of IC "splitting" as discussed above. Now, for a given realization $\matr{s}$, the variance of $\vect{\hat{s}}$ is given by $\mbox{Var}(\vect{\hat{s}}) = \sigma^2 (\matr{A^T} \matr{A})^{-1}$ (for isotropic Gaussian co-variance). An increase in number of columns of $\matr{A}$ and the fact that many of them are highly correlated implies that the variability of IC estimates $\mbox{Var}(\vect{\hat{s}})$ is inflated.

\end{enumerate}

In other words, running ICA multiple times on the same data or variations thereof with random initialization could produce different ICs.

\subsection{ICA algorithms, single subject ICA and group ICA}\label{ica_single_subject_and_group}

In this section, we give a brief summary of how the ICA parameters are estimated in practice and also summarize the two most common modes of ICA application to fMRI data - single subject ICA (section \ref{single_subject_ica}) and temporal concatenation based group ICA (section \ref{group_ica}).

Given several independent observations $\vect{y}$ as per the noisy ICA model \ref{eq1d}, most ICA algorithms estimate the ICA parameters $\mathcal{F} = \left\{ \vect{\mu}, \matr{A}, \matr{D}, \matr{\Sigma} \right\}$ and the realizations of $\vect{s}$ in 2 steps. We only consider the case with $\matr{\Sigma} = \sigma^2 \matr{I_p}$, since as shown in section \ref{identifiability}, the mixing matrix $\matr{A}$ and source distributions of $\vect{s}$ are identifiable upto permutation and scaling for this case.
\begin{enumerate}
\item First, the diagonal source co-variance is arbitrarily set as $\matr{D} = \matr{I_q}$. The mean vector $\vect{\mu}$ is estimated as $\mathbf{E}\left( \vect{y} \right)$.
Then, using PCA or PPCA \citep{Tipping:1999}, the mixing matrix $\matr{A}$ is estimated, upto an orthogonal rotation matrix $\matr{O}$, to be in a signal subspace which is spanned by the principal eigenvectors corresponding to the largest eigenvalues of the data co-variance matrix $\matr{E}\left[ (\vect{y} - \vect{\mu}) (\vect{y} - \vect{\mu})^T \right]$. The noise variance $\sigma^2$ is estimated in this step as well.
\item Next, an estimator $\vect{\hat{s}}$ for the source realizations is defined using techniques such as least squares or MMSE. The only unknown involved in these estimates is the orthogonal rotation matrix $\matr{O}$.
\item Finally, the non-Gaussianity of the empirical density of components of $\vect{\hat{s}}$ is optimized with respect to $\matr{O}$ using algorithms such as fixed point ICA \cite{negentropy:1998, FPICA:1999}.
\end{enumerate}
For more details on noisy ICA estimation, please see \citep{Beckmann:2004} and for more details on ICA algorithms, please see \citep{Hyvarinen:book}.
\subsubsection{Single subject ICA}\label{single_subject_ica}
How is ICA applied to single subject fMRI data?
Suppose we are given a single subject fMRI scan which we rearrange as a $p \times n$ 2D matrix $\matr{Y}$ in which column $i$ is the $p \times 1$ observed time-course $\vect{y_i}$ in the brain at voxel $i$. Observed time-courses $\vect{y_1}, \vect{y_2},\ldots, \vect{y_n}$ are considered to be $n$ independent realizations of $\vect{y}$ as per the linear ICA model \ref{eq1d}. Suppose $\matr{\hat{S}} = [ \vect{\hat{s}_1}, \vect{\hat{s}_2}, \ldots, \vect{\hat{s}_n}]$ is the $q \times n$ matrix containing the estimated source realizations at the $n$ voxels. The $j$th row of $\matr{\hat{S}}$ is the $j$th IC. In other words, we decompose the time by space fMRI 2D matrix into a set of basis time-courses and a set of $q$ 3D  IC maps using ICA. 

\subsubsection{Group ICA}\label{group_ica}
How is ICA applied to data from a group of subjects in fMRI?
Suppose we collect fMRI images from $m$ subjects. First, we register all subjects to a common space using a registration algorithm (e.g., affine registration). Next, we rearrange each of the fMRI scans into $m$ 2D matrices $\matr{Y_1} \ldots \matr{Y_m}$, each of size $p \times n$. Column $j$ in $\matr{Y_i}$ is the demeaned time-course observed at voxel location $j$ for subject $i$. The matrices $\matr{Y_1} \ldots \matr{Y_m}$ are temporally concatenated to get a $p m \times n$ matrix $\matr{Z}$ as follows:
\begin{equation}\label{eq1f}
\matr{Z} = \begin{pmatrix} \matr{Y_1} \\ \vdots \\ \matr{Y_i} \\ \vdots \\ \matr{Y_m} \end{pmatrix}
\end{equation}
Column $i$ of $\matr{Z}$ is the $p m \times 1$ vector $\vect{z_i}$ which is assumed to follow a linear ICA model \ref{eq1d}. $\vect{z_1}, \vect{z_2}, \ldots, \vect{z_n}$ are considered to be independent realizations of the model \ref{eq1d}. Suppose $\matr{\hat{S}_G} = [\vect{\hat{s}_1}, \vect{\hat{s}_2}, \ldots, \vect{\hat{s}_n}]$ is a $q \times n$ matrix containing the estimated source realizations at the $n$ voxels. The $j$th row of $\matr{\hat{S}_G}$ is the $j$th group IC. In group ICA, the joined time-series across subjects is modeled using noisy linear ICA. In practice, $\matr{Y_i}$ is the PCA reduced data set for subject $i$. The PCA reduction is either done separately for each subject using subject specific data co-variance \citep{Calhoun:2001} or an average data co-variance across subjects \citep{Beckmann:2005}. The average co-variance approach requires each subject to have the same number of time points in fMRI scans.

\subsection{The original RAICAR algorithm}\label{orig_raicar}

In this section, we give a brief introduction to the RAICAR algorithm of \citep{Yang:2008}.  Suppose we are given a data set which we decompose into $n_C$ ICs using ICA (e.g., single subject or group ICA). Our goal is to assess which ICs consistently show up in multiple ICA runs i.e., the reproducibility of each of these $n_C$ ICs. To that extent, we run the ICA algorithm $K$ times. Suppose $\vect{x}_j^{(m)}$ is the $n \times 1$ vector (e.g. spatial ICA map re-arranged into a vector) of the $j$th IC from $m$th ICA run. Suppose $\matr{G}_{lm}$ is a $n_C \times n_C$ absolute spatial cross-correlation coefficient matrix between the ICs from runs $l$ and $m$:

\begin{equation}\label{eq1a}
\matr{G}_{lm}(i,j) = | \mbox{corrcoef}(\vect{x}_i^{(l)}, \vect{x}_j^{(m)}) |
\end{equation}
where $| . |$ denotes absolute value. $\matr{G}_{lm}(i,j)$ is the absolute spatial cross-correlation coefficient between IC $i$ from run $l$ and IC $j$ from run $m$. The matrices $\matr{G}_{lm}$ are then arranged as elements of a $K \times K$ \textit{block-matrix} $\matr{G}$ such that the $l$th row and $m$th column of $\matr{G}$ is $\matr{G}(l,m) = \matr{G}_{lm}$ (see Figure \ref{figure2}). This block matrix $\matr{G}$ is the starting point for a RAICAR across-run component matching process. 

Since ICs within a particular run cannot be matched to each other, the $n_C \times n_C$ matrices $\matr{G}(l,l), l = 1 \ldots K$ along the block-diagonal of $\matr{G}$ are set to $\matr{0}$ as shown in Figure \ref{figure2} with a gray color. The following steps are involved in a RAICAR analysis:

\begin{enumerate}
\item Find the maximal element of $\matr{G}$. Suppose this maximum occurs in matrix $\matr{G}_{lm}$ at position $(i,j)$. Hence component $i$ from run $l$ matches component $j$ from run $m$. Let us label this matched component by $MC_1$ (the first matched component).

\item Next, we attempt to find from each run $s$ ($s \neq l$ and $s \neq m$) a component that matches with component $MC_1$. Suppose element $(a_s, j)$ is the maximal element in the $j$th column of $\matr{G}_{sm}$. Then component $a_s$ is the best matching component from run $s$ with the $j$th component from run $m$.

Similarly, suppose element $(i, b_s)$ is the maximal element in the $i$th row of $\matr{G}_{ls}$. Then component $b_s$ is the best matching component from run $s$ with component $i$ from run $l$. As noted in \citep{Yang:2008}, in most cases $a_s = b_s$. However, it is possible that $a_s \neq b_s$. Hence the component number $e_s$ matching $MC_1$ from run $s$ is defined as follows:

\begin{equation}\label{eq2a}
e_s =\begin{cases}
a_s & \text{if $\matr{G}_{sm}(a_s, j) \ge \matr{G}_{ls}(i, b_s)$},\\
b_s & \text{if $\matr{G}_{sm}(a_s, j) < \matr{G}_{ls}(i, b_s)$}.
\end{cases}
\end{equation}

We would also like to remove component $e_s$ of run $s$ from further consideration during the matching process. To that extent, we zero out the $e_s$th row from $\matr{G}_{sr}, r = 1 \ldots K$ and the $e_s$th column from $\matr{G}_{rs}, r = 1 \ldots K$.

\item Once a matching component $e_s$ has been found for all runs $s \neq l,m$, we also zero out the $i$th row from $\matr{G}_{lr}, r = 1 \ldots K$ and the ith column from $\matr{G}_{rl}, r = 1 \ldots K$. Similarly, we zero out the $j$th column from $\matr{G}_{rm}, r = 1 \ldots K$ and the $j$th row from $\matr{G}_{mr}, r = 1 \ldots K$. This eliminates component $i$ from run $l$ and component $j$ from run $m$ from further consideration during the matching process.

\item Steps 1-3 complete the matching process for one IC component across runs. These steps are repeated until $n_C$ components are matched across the $K$ runs. We label the matched component $s$ as $MC_s$ which contains a set of $K$ matching ICs one from each of the $K$ ICA runs.

\end{enumerate}

Suppose matched component $s$, $MC_s$ consists of the matched ICs $\vect{x}_{i_1}^{(1)}, \vect{x}_{i_2}^{(2)},\ldots,\vect{x}_{i_K}^{(K)}$. Form the $K \times K$ cross-correlation matrix $H_{MC_s}$ between the matched components in $MC_s$. The $(a,b)$th element of this matrix is simply:
\begin{equation}\label{eq3a}
H_{MC_s}(a,b) = | \mbox{corrcoef}\left( \vect{x}_{i_a}^{(a)}, \vect{x}_{i_b}^{(b)} \right) |
\end{equation}
The normalized reproducibility of $MC_s$ is then defined as:

\begin{equation}\label{eq4a}
\mbox{Reproducibility}(MC_s) = \left( \frac{2}{(K-1)K} \right) \sum_{a=1}^K \sum_{b=a+1}^K H_{MC_s}(a,b)
\end{equation}

The double sum in \ref{eq4a} is simply the sum of the upper triangular part of $H_{MC_s}$ excluding the diagonal. The normalizing factor $\frac{ (K-1)K }{2}$ is simply the maximum possible value of this sum. Hence the normalized reproducibility satisfies: $\mbox{Reproducibility}(MC_s) \le 1$.

Note that our definition of normalized reproducibility is slightly different from that in \cite{Yang:2008}. Whereas \cite{Yang:2008} averages the \textit{thresholded} absolute correlation coefficients, we simply average the \textit{un-thresholded} absolute correlation coefficients to compute reproducibility thereby avoiding the selection of a threshold on the absolute correlation coefficients.

\begin{figure}[htbp]
\begin{center}
\includegraphics[width=6.2in]{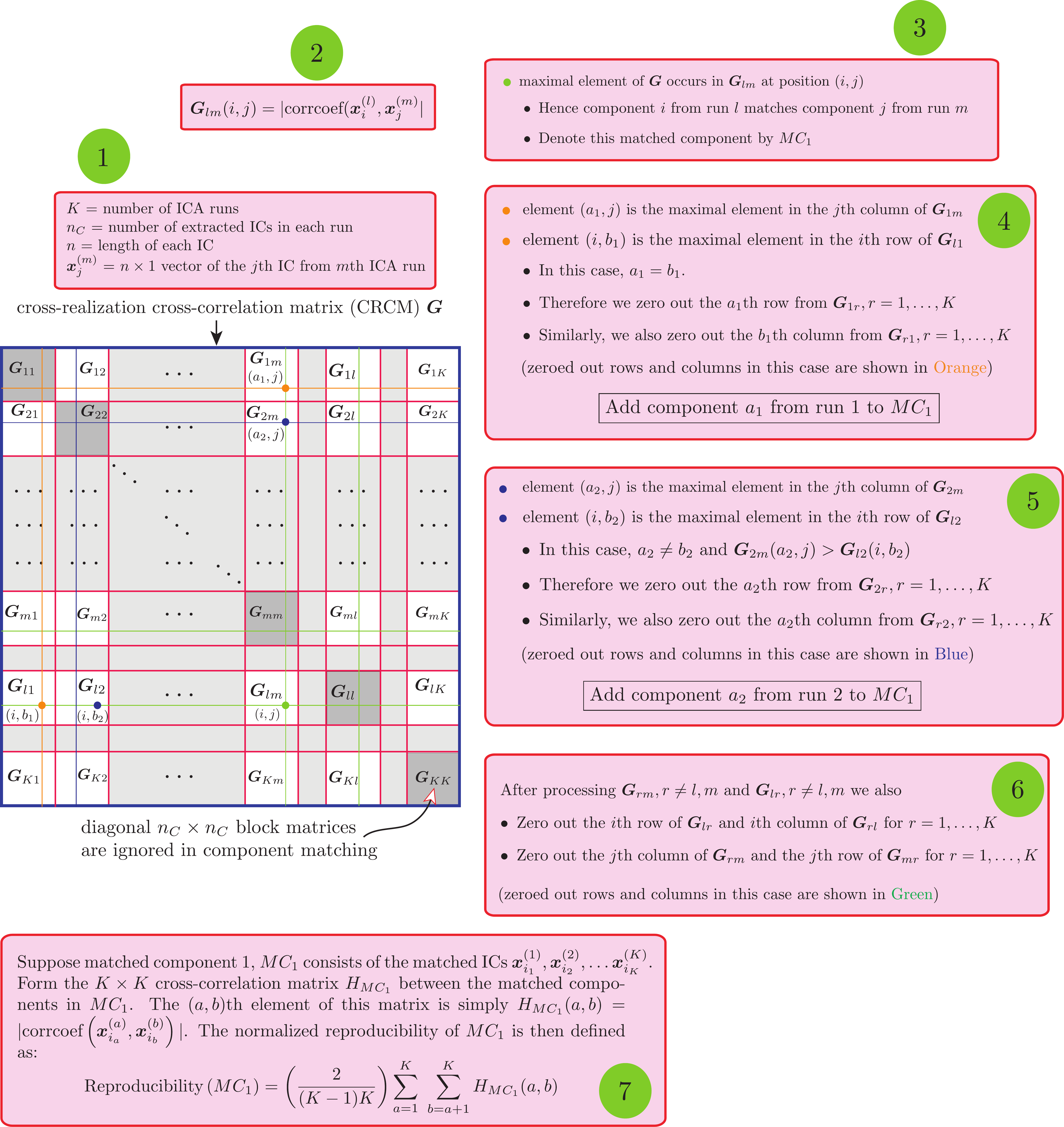}
\caption{Pictorial depiction of the original RAICAR algorithm \citep{Yang:2008}. The ICA algorithm is run $K$ times with each run producing $n_C$ ICs. $\matr{G}$ is a $K \times K$ block matrix with elements $\matr{G}(l,m) = \matr{G}_{lm}$ where $\matr{G}_{lm}$ is the $n_C \times n_C$ absolute spatial cross-correlation matrix between ICs from runs $l$ and $m$. The numbered green circles indicate the sequence of steps in applying RAICAR to a given data set. Our definition of normalized reproducibility in box 7 averages un-thresholded correlation coefficients thereby avoiding the selection of a correlation coefficient threshold prior to averaging.}
\label{figure2}
\end{center}
\end{figure}

\subsection{The RAICAR-N enhancement}\label{raicar_n}

In this section, we describe how to compute reproducibility $p$-values for each matched component in RAICAR.
Note that the RAICAR "component matching" process can be used to assess the reproducibility of \textit{any} spatial component maps - not necessarily ICA maps. For instance, RAICAR can be used to assess the reproducibility of a set of PCA maps across subjects. 

In order to generate reproducibility $p$-values for the matched component maps:
\begin{enumerate}
\item We need to determine the distribution of normalized reproducibility that we get from the RAICAR "component matching" process when the input to RAICAR represents a set of "non-reproducible component maps" across the $K$ runs. 
\item In addition, we would also like to preserve the \textit{overall structure} seen in the \textit{observed} sets of spatial component maps across the $K$ runs when generating sets of "non-reproducible component maps" across the $K$ runs.
\end{enumerate}
Hence for IC reproducibility assessment, we propose to use the original set of ICs across the $K$ runs to generate the "non-reproducible component maps" across the $K$ runs. 

Suppose $K$ ICA runs are submitted to RAICAR which gives us a $n_C \times 1$ vector of \textit{observed} normalized reproducibility values $\mbox{Reproducibility}(MC_i), i = 1 \ldots n_C$ - one for each IC. We propose to attach $p$-values for measuring the reproducibility of each IC in a data-driven fashion as follows:

\begin{enumerate}
\item First, we label the $K n_C$ ICs across the $K$ runs using unique integers. In run 1, the ICs are labelled using integers $1,\ldots,n_C$. In run 2, the ICs are labelled using integers $(n_C + 1),\ldots,2 n_C$ and so on. In run $K$, the ICs are labelled using integers $(K-1) n_C + 1,\ldots,K n_C$.

\item Our "null" hypothesis is: 
\begin{align}\label{eq1b}
\mathbf{H_0}: \,\,\, &\mbox{\textbf{None of the ICs are reproducible}} \\ \nonumber
		       &\mbox{Hence, we can randomly label component $i$ from run $l$ as component $d$ from run $s$ }
\end{align}

To do this, we randomly permute the integers $1,2,\ldots,K n_C$ to get the permuted integers $p(1),p(2),\ldots,p(K n_C)$. Obviously $p(i) \neq p(j) \mbox{ if } i \neq j$.

\item The $K$ sets "non-reproducible component runs under $\mathbf{H_0}$" are constructed by assigning components with labels:
	\begin{itemize}
		\item $p(1),\ldots,p(n_C)$ to run 1 under $\mathbf{H_0}$.
		\item $p(n_C+1),\ldots,p(2 n_C)$ to run 2 under $\mathbf{H_0}$
		\item $p\left( (K-1) n_C + 1 \right),\ldots,p(K n_C)$ to run $K$ under $\mathbf{H_0}$
	\end{itemize}

\item After $K$ runs have been generated under $\mathbf{H_0}$, we subject these to a RAICAR analysis. This gives us $n_C$ values of normalized reproducibility, one for each matched component under $\mathbf{H_0}$.

\item Steps 1-4 are repeated $R$ times to build up a \textit{pooled} $R n_C \times 1$ vector of normalized reproducibility $\mbox{\textbf{Reproducibility}}_{Null}$ under $\mathbf{H_0}$.

\item Finally, we assign a $p$-value for reproducibility to each matched IC across the $K$ runs. The observed reproducibility for $i$th matched IC is $\mbox{Reproducibility}(MC_i)$ and its $p$-value is:
\begin{equation}\label{reproducibility}
\mbox{Reproducibility}_{pval}(MC_i) = \frac{ \left\{ \mbox{no. of } \mbox{\textbf{Reproducibility}}_{Null} \ge \mbox{Reproducibility}(MC_i) \right\} + 1}{R n_C + 1}
\end{equation}

\item Only those components with $\mbox{Reproducibility}_{pval}(MC_i) < p_{crit}$ are considered to be significantly reproducible. We can use a fixed and objective value for $p_{crit}$ such as $0.05$. Note that this fixed cutoff is independent of the amount of variability in the input to RAICAR-N. Please see Figure \ref{figure3} for a pictorial depiction of this process.
\end{enumerate}

\begin{figure}[htbp]
\begin{center}
\includegraphics[width=4.1in]{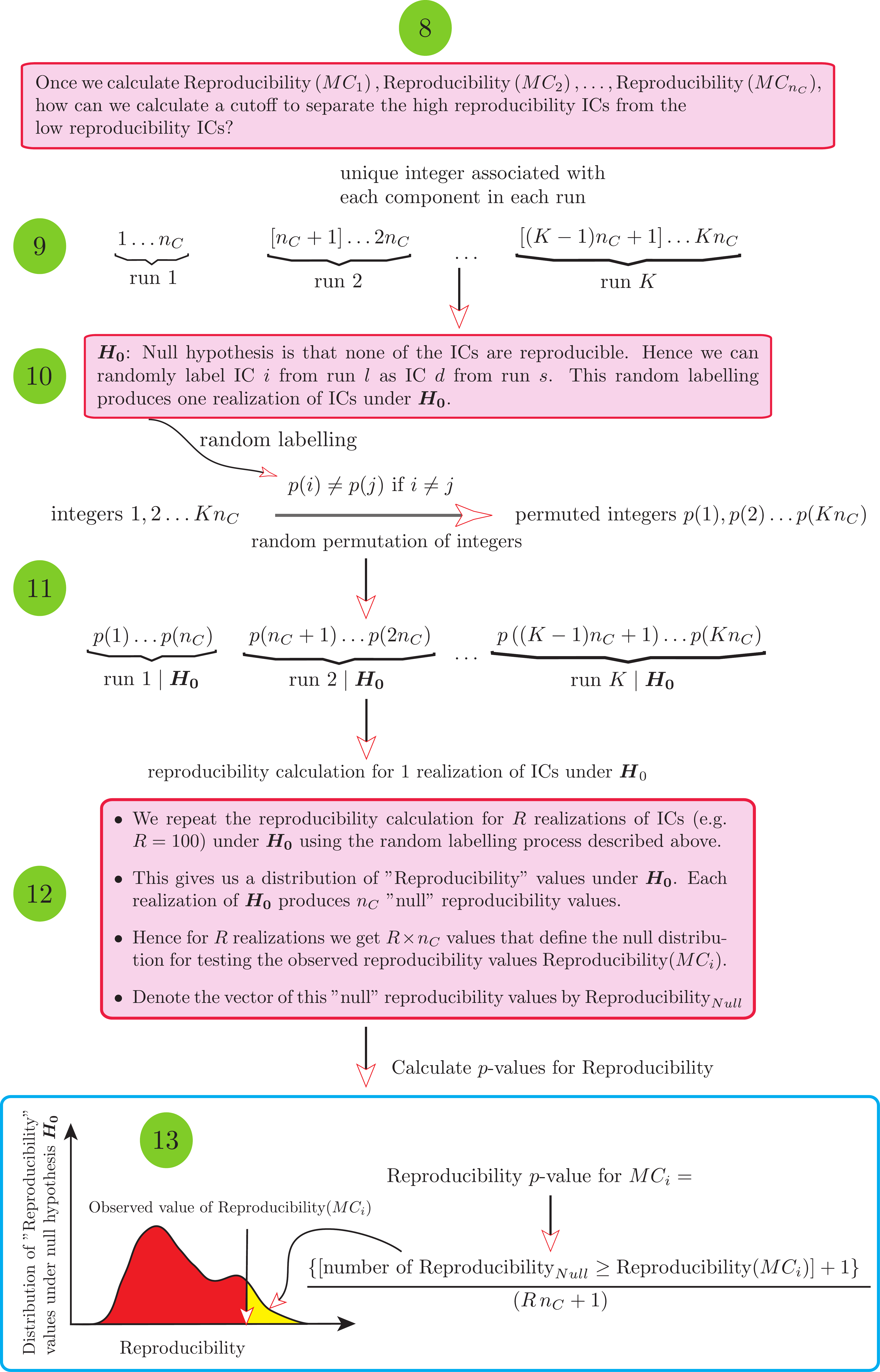}
\caption{Pictorial depiction of the process for generating a "null" distribution in RAICAR-N. Our "null" hypothesis is: "$\mathbf{H_0}$: None of the ICs are reproducible. Hence, we can randomly label IC $i$ from run $l$ as IC $d$ from run $s$". Therefore we randomly split the $K n_C$ ICs across $K$ runs into $K$ parts and run the RAICAR algorithm on each set of randomly split ICs. This gives us a set of "null" reproducibility values which can be used to compute $p$-values for the observed reproducibility of ICs in the original RAICAR run. The green circles indicate the sequence of steps for generating the "null" distribution after the steps in Figure \ref{figure2}.}
\label{figure3}
\end{center}
\end{figure}

\subsection{How many subjects should be used per group ICA run in RAICAR-N?}\label{choosingL}

The input to RAICAR-N can either be single subject ICA runs or group ICA runs across a set of subjects. Note that the individual subject ICA runs are spatially unconstrained whereas a group ICA spatially constrains the group ICs across a set of subjects. Hence the number of ICs that can be declared as significantly reproducible at the group level are usually more than those that can be declared significantly reproducible at the single subject level. Hence the following question is relevant:

\begin{figure}[htbp]
\begin{center}
\includegraphics[width=6.3in]{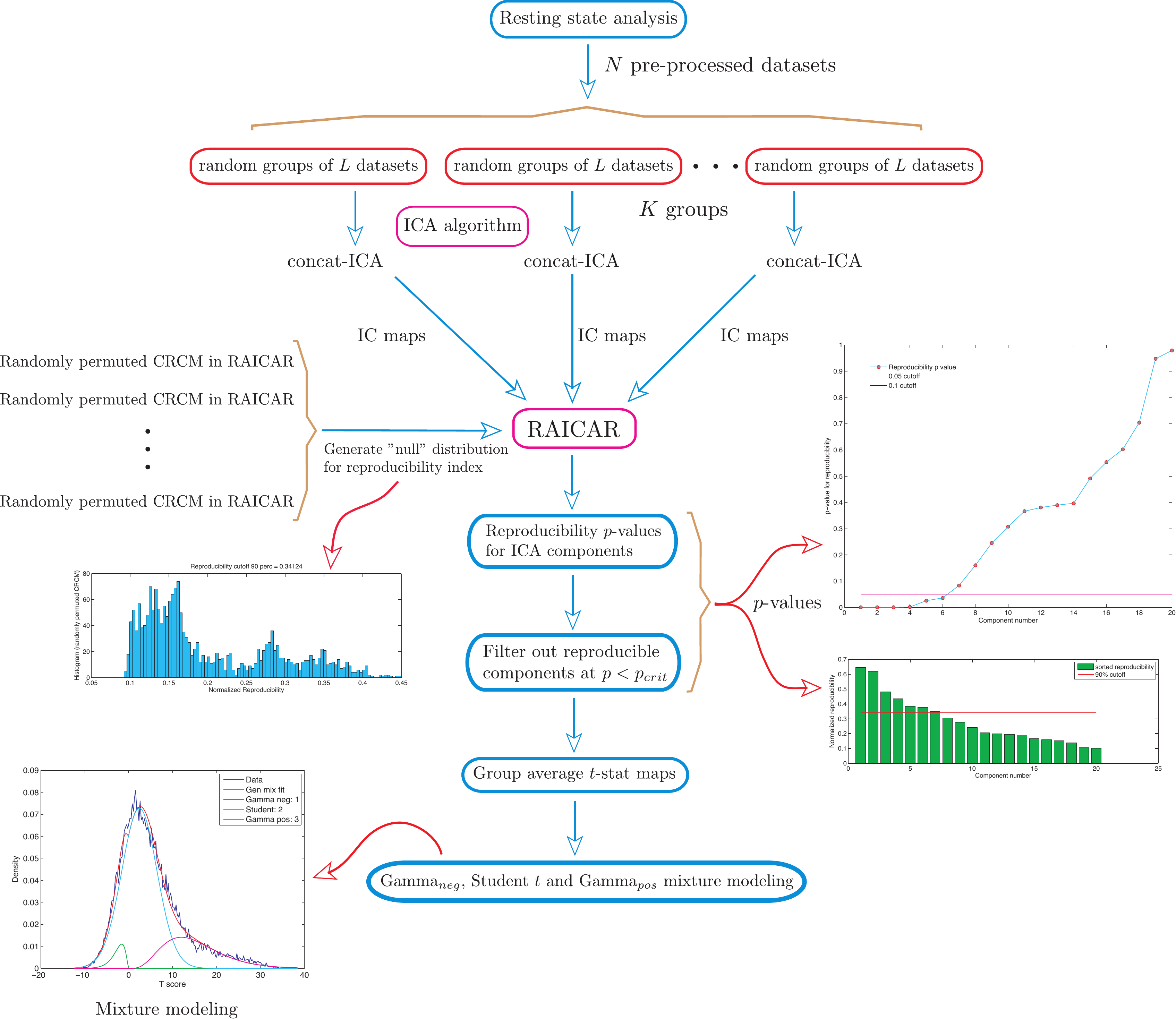}
\caption{Flowchart for a group ICA based RAICAR-N analysis. The $N$ single subject data sets are first pre-processed and subsequently bootstrapped to create $K$ groups, each group containing $L$ distinct subjects. Each group of $L$ subjects is submitted to a temporal concatenation group ICA analysis. The resulting IC maps (either raw ICs or ICs scaled by noise standard deviation) are subjected to a RAICAR analysis. The cross-realization cross correlation matrix (CRCM) is randomly permuted multiple times: $\matr{G} \rightarrow \matr{G}(\vect{g},\vect{g})$ where $\vect{g}$ is a random permutation of integers from $1,\ldots,K n_C$. The permuted CRCMs are subjected to a RAICAR analysis to generate a realization of reproducibility values under the "null" hypothesis. The computed "null" distribution of reproducibility values is used to assign $p$ values to the observed reproducibility of the original RAICAR run. Finally, reproducible ICs are averaged using a random effects analysis and the resulting $t$-statistic images are subjected to Gamma$_{neg}$, Student $t$ and Gamma$_{pos}$ mixture modeling.}
\label{figure4}
\end{center}
\end{figure}

Suppose we have a group of $N$ subjects. We randomly select $L$ subjects and form a single group of subjects. We repeat this process $K$ times to get $K$ groups of $L$ subjects each of which is subjected to a group ICA analysis. Given the number of subjects $N$, how should we choose $L$ and $K$?

First, we discuss the choice of $L$. If $L=N$ then each of the $K$ groups will contain the same $N$ subjects and hence there will be no diversity in the $K$ groups. We would like to control the amount of diversity in the $K$ groups of $L$ subjects. Consider any 2 subjects $X$ and $Y$. The probability $P_{XY}(L)$ that both $X$ and $Y$ appear in a set of $L$ randomly chosen subjects from $N$ subjects is given by:

\begin{equation}\label{eq1}
P_{XY}(L) = \frac{ {N-2 \choose L-2} }{ {N \choose L} }
\end{equation}

The expected number of times that $X$ and $Y$ appear together in sets of $L$ subjects out of $K$ independently drawn sets is:
\begin{equation}\label{eq2}
E_{XY}(L) = K \, P_{XY}(L)
\end{equation}

Ideally, we would like $E_{XY}(L)$ to be only a small fraction of $K$. Hence we impose the restriction:

\begin{equation}\label{eq3}
E_{XY}(L) = K \, P_{XY}(L) \le \alpha_{max} \, K
\end{equation}

where $\alpha_{max}$ is a user defined constant such as $\alpha_{max} = 0.05$. This implies that the chosen value of $L$ must satisfy:

\begin{equation}\label{eq4}
P_{XY}(L) \le \alpha_{max}
\end{equation}

In practice, we choose the largest value of $L$ that satisfies this inequality. As shown in Figure \ref{figure5}, if $N = 23$ and $\alpha_{max} = 0.05$ then the largest value of $L$ that satisfies \ref{eq4} is $L = 5$. 
\begin{figure}[htbp]
\begin{center}
\includegraphics[width=3.0in]{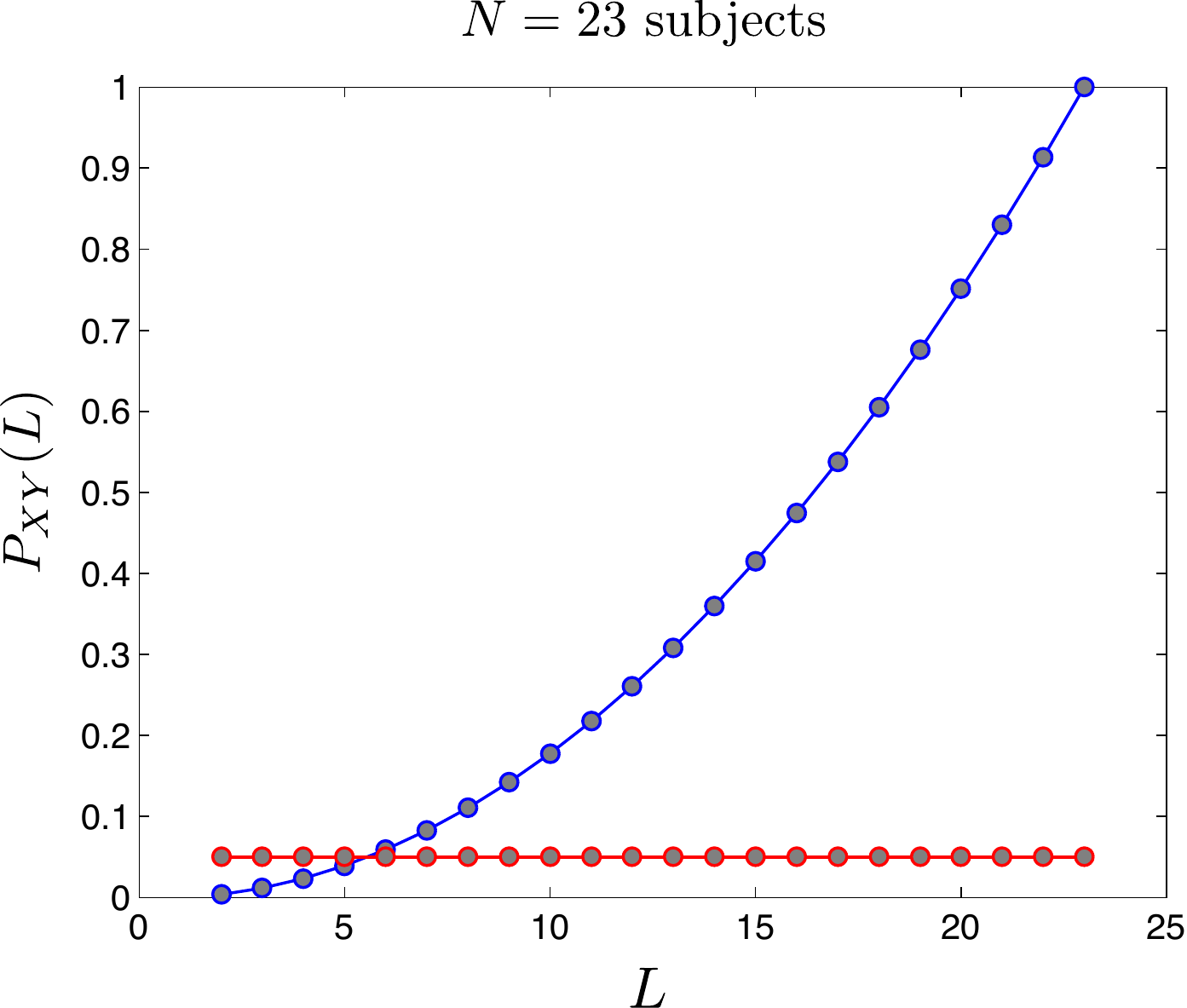}
\caption{Figure shows a plot of $P_{XY}(L)$ vs $L$ for $N = 23$ in \textcolor{blue}{blue}. The \textcolor{red}{red} line shows the $\alpha_{max} = 0.05$ cutoff. The largest value of $L$ for which $P_{XY}(L) \le 0.05$ is $L = 5$.}
\label{figure5}
\end{center}
\end{figure}

The number of group ICA runs $K$ should be as large as possible. From our experiments on real fMRI data we can roughly say that values of $K > 50$ give equivalent results.

\subsection{How to display the estimated non-Gaussian spatial structure in ICA maps?}\label{non_gaussian_structure}

The ICs have been optimized for non-Gaussianity. However, there can be many types of non-Gaussian distributions. It has been empirically found that the non-Gaussian distributions of ICs found in fMRI data have the following structure: 
\begin{enumerate}
\item A central Gaussian looking part and 
\item A tail that extends out on either end of the Gaussian
\end{enumerate} 
It has been suggested in \citep{Beckmann:2004} that a Gaussian/Gamma mixture model can be fitted to this distribution and the Gamma components can be thought of as representatives of the non-Gaussian structure. We follow a similar approach:

\begin{enumerate}

\item The output of a RAICAR-N analysis is a set of spatial ICA maps (either $z$-transformed maps or raw maps) concatenated into a 4-D volume. 

\item We do a voxelwise transformation to Normality using the voxelwise empirical cumulative distribution function as described in \citep{vanAlbada:2007}. 

\item Next, we submit the resulting 4-D volume to a voxelwise group analysis using ordinary least squares. The design matrix for group analysis depends on the question being considered. In our case, the design matrix was simply a single group average design. 

\item The resulting $t$-statistic maps are subjected to Student $t$, Gamma$_{pos}$ and Gamma$_{neg}$ mixture modeling. The logic is that if the original ICA maps are pure Gaussian (i.e., have no interesting non-Gaussian structure) then the result of a group average analysis will be a pure Student $t$ map which will be captured by a single Student $t$ (i.e., the Gamma$_{pos}$ and Gamma$_{neg}$ will be driven to $0$ class fractions). Hence the "null" hypothesis will be correctly accounted for. 

\item If the Gamma distributions have $> 0.5$ posterior probability at some voxels then those voxels are displayed in color to indicate the presence of significant non-Gaussian structure over and above the background Student $t$ distribution.

\end{enumerate}
Examples of Student $t$, Gamma$_{pos}$ and Gamma$_{neg}$ mixture model fits are shown in Figure \ref{figure6}.

\begin{figure}[htbp]
\begin{center}
\includegraphics[width=6.0in]{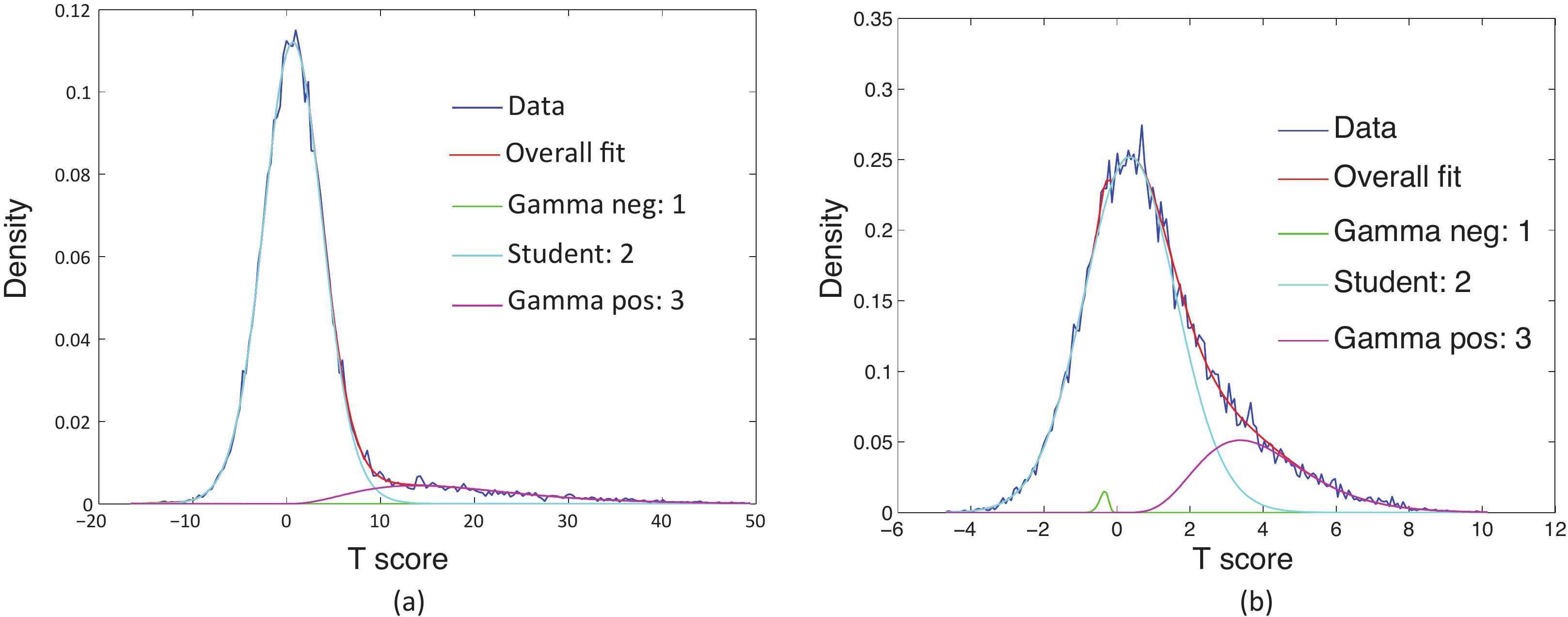}
\caption{Examples of displaying non-Gaussian spatial structure using a Student $t$, Gamma$_{pos}$ and Gamma$_{neg}$ mixture model. Notice how the Gamma$_{neg}$ density is driven to near $0$ class fraction in the absence of significant negative non-Gaussian structure.}
\label{figure6}
\end{center}
\end{figure}

\begin{figure}[htbp]
\begin{center}
\includegraphics[width=6.0in]{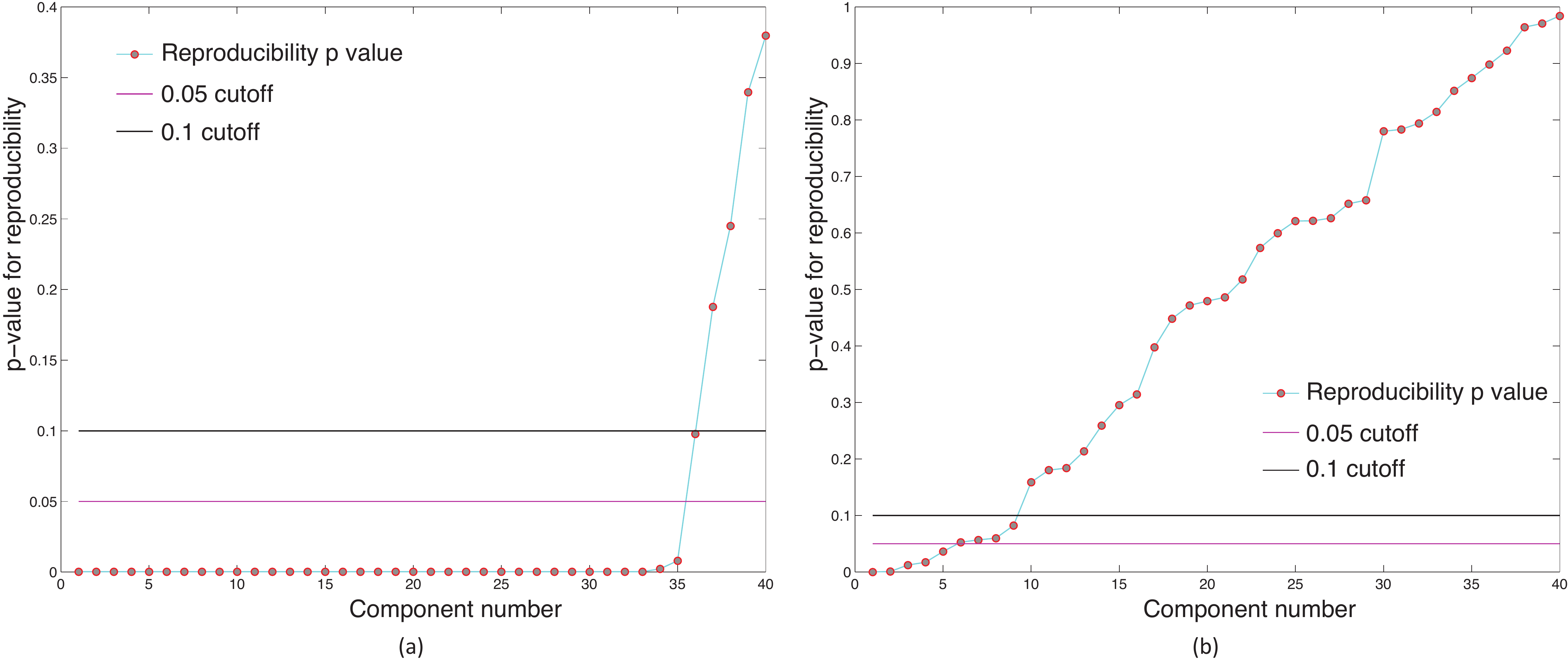}
\caption{$p$-value cutoffs for within and across single subject analysis using RAICAR-N. This figure illustrates the intuitive fact that within subject ICA runs are much more reproducible compared to across subject ICA runs.}
\label{figure7}
\end{center}
\end{figure}

\section{Experiments and Results}

\subsection{Human rsfMRI data}
rsfMRI data titled: \verb+Baltimore (Pekar, J.J./Mostofsky, S.H.; n = 23 [8M/15F]; ages: 20-40;+\\
\verb+ TR = 2.5; # slices = 47; # timepoints = 123)+, a part of the 1000 functional connectomes project, was downloaded from the Neuroimaging Informatics Tools and Resources Clearinghouse (NITRC): \url{http://www.nitrc.org/projects/fcon_1000/}. 

\subsection{Preprocessing}
Data was analyzed using tools from the FMRIB software library (FSL: \url{http://www.fmrib.ox.ac.uk/fsl/}). Preprocessing steps included motion correction, brain extraction, spatial smoothing with an isotropic Gaussian kernel of 5mm FWHM and 100s high-pass temporal filtering. Spatial ICA was performed using a noisy ICA model as implemented in FSL MELODIC \citep{Beckmann:2004} in either single subject or multi-subject temporal concatenation mode also called group ICA. Please see section \ref{ica_single_subject_and_group} for a brief summary of single subject ICA and group ICA. In each case, we fixed the model order of ICA at $q=40$ to be consistent with the model order range typically extracted in rsfMRI and fMRI \citep{Smith:2009, Esposito:2005}. For temporal concatenation based group ICA, single subject data was first affinely registered to the MNI 152 brain and subsequently resampled to 4x4x4 resolution (MNI 4x4x4) to decrease computational load. 

\subsection{RAICAR-N analysis with 1 ICA run per subject}

Spatial ICA was run once for each of the $N = 23$ subjects in their native space. The resulting set of ICA components across subjects were transformed to MNI 4x4x4 space and were submitted to a RAICAR-N analysis.\footnote[1]{In all RAICAR-N analyses reported in this article, we used the \mbox{$z$}-transformed IC maps - which are basically the raw IC maps divided by a voxelwise estimate of noise standard deviation (named as \texttt{melodic\_IC.nii.gz} in MELODIC). It is also possible to use the raw IC maps as inputs to RAICAR-N.} ICA components were sorted according to their reproducibility and $p$-values were computed for each ICA component. Please see Figure \ref{figure_pval_23subj}. 

\begin{figure}[htbp]
\begin{center}
\includegraphics[width=6.0in]{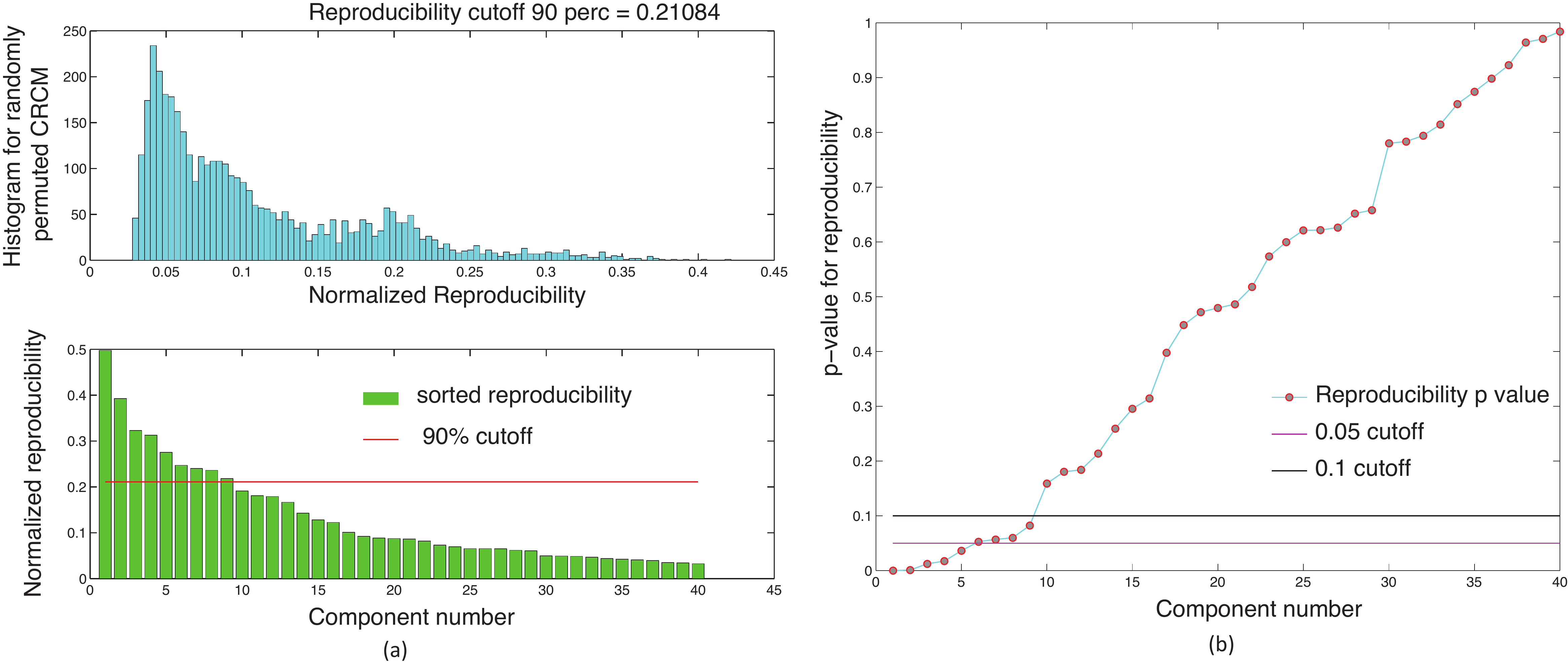}
\caption{Single subject rsfMRI ICA runs across 23 subjects were combined using a RAICAR-N analysis. Figure (a) shows the observed values of normalized reproducibility (bottom) as well as the "null" distribution of normalized reproducibility across $R=100$ simulations (top). Figure (b) shows the $p$-values for each IC along with the $0.05$ and $0.1$ cutoff lines.}
\label{figure_pval_23subj}
\end{center}
\end{figure}

We compared the reproducible RSNs from the single subject RAICAR-N analysis to the group RSN maps reported in literature \citep{Beckmann:2005}. Please see Figure \ref{figure_top8maps_summary}.

\begin{figure}[htbp]
\begin{center}
\includegraphics[width=5.8in]{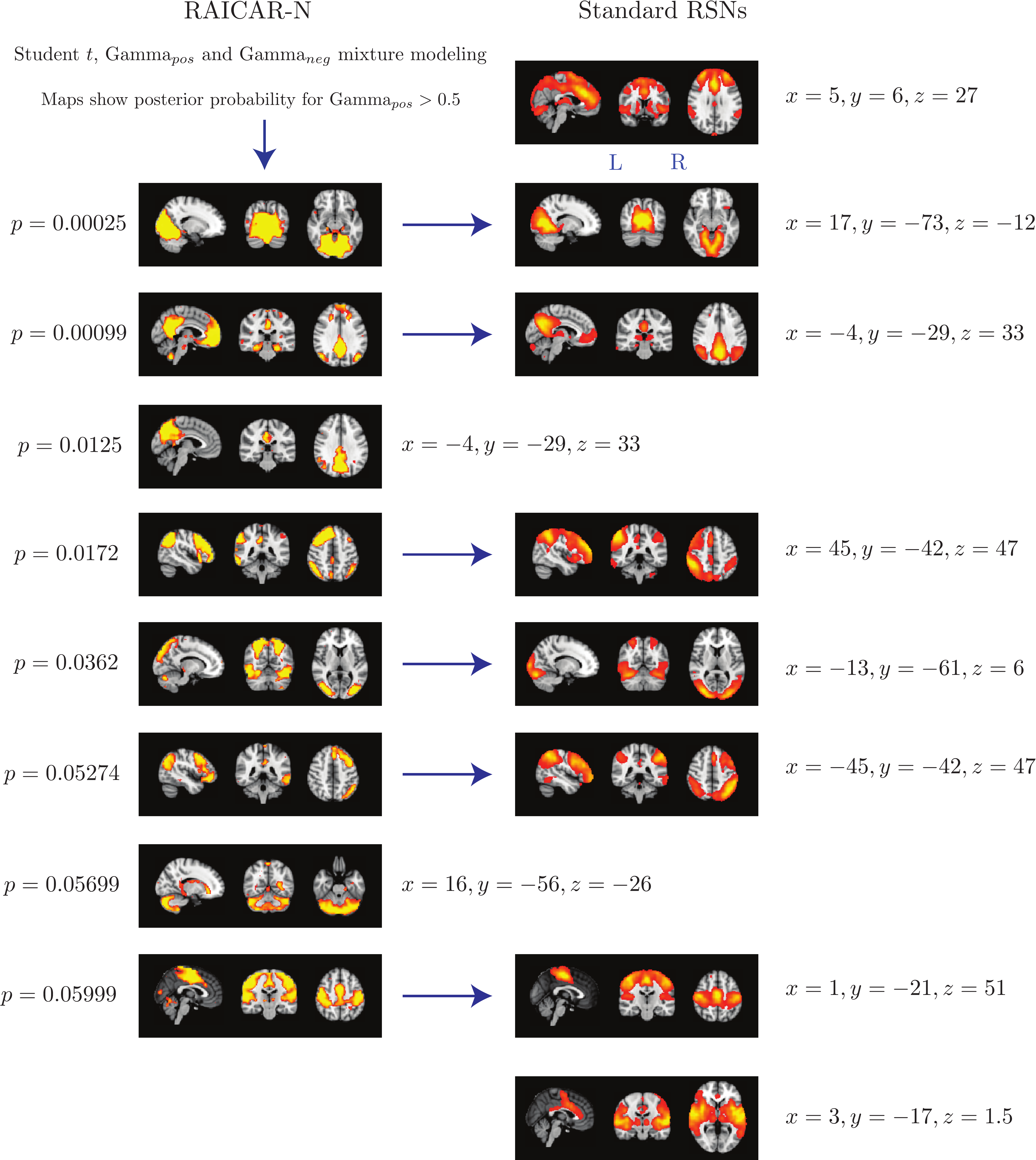}
\caption{The top 8 "reproducible" ICs from a RAICAR-N analysis on single subject ICA runs compared with standard RSN maps reported in literature \citep{Beckmann:2005}. We are able to declare 4 "standard" RSNs as significantly reproducible at a $p$-value $<0.05$. There are 2 other "standard" RSNs that achieve a reproducibility $p$-value between $0.05$ and $0.06$ as well as 2 "non-standard" RSNs that achieve $p$-values of $0.0125$ and $0.05699$ respectively. We also could not find 2 of the published RSNs in \citep{Beckmann:2005} as reproducible in single subject ICA runs.}
\label{figure_top8maps_summary}
\end{center}
\end{figure}

To summarize, when single subject ICA runs are combined across subjects: 
\begin{itemize}
\item We are able to declare 4 "standard" RSNs as significantly reproducible at a $p$-value $< 0.05$. 
\item There are 2 other "standard" RSNs that achieve a reproducibility $p$-value between 0.05 and 0.06. 
\item There are 2 other "non-standard" RSNs that are of interest: one achieves a $p$-value of 0.0125 and the other achieves a $p$-value of 0.05699. 
\end{itemize}

\subsection{RAICAR-N on random sets of 5 subjects - 50 group ICA runs}
To promote diversity across the group ICA runs, as discussed in section \ref{choosingL}, $L = 5$ subjects were drawn at random from the group of $N = 23$ subjects and submitted to a temporal concatenation based group ICA. This process was repeated $K = 50$ times and the resulting set of 50 group ICA maps were submitted to a RAICAR-N analysis. ICA components were sorted according to their reproducibility and $p$-values were computed for each ICA component. Please see Figure \ref{figure_pval_50groupof5}.

\begin{figure}[htbp]
\begin{center}
\includegraphics[width=6.0in]{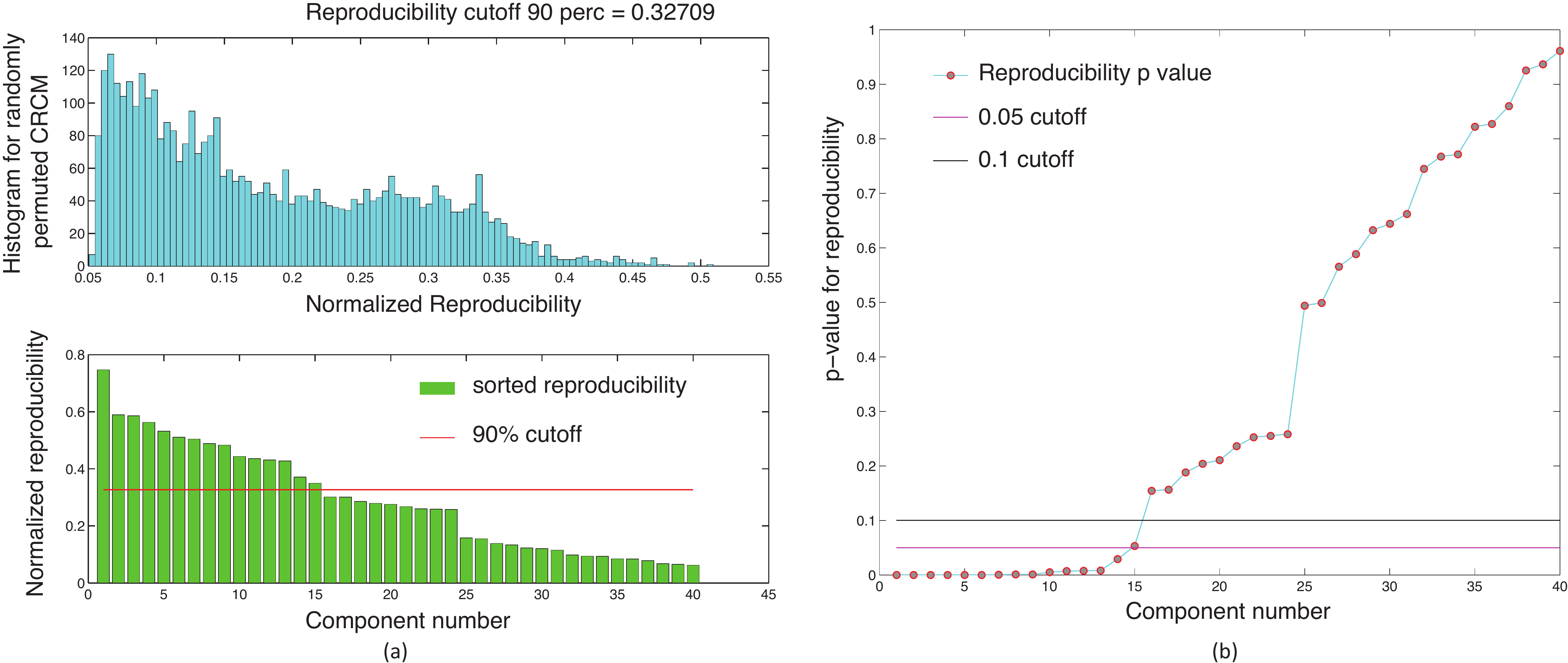}
\caption{$L=5$ subjects were randomly drawn from the set of $N=23$ subjects and submitted to a temporal concatenation based group ICA. This process was repeated $K=50$ times and the resulting ICA maps were submitted to a RAICAR-N analysis. Figure (a) shows the observed values of normalized reproducibility (bottom) as well as the "null" distribution of normalized reproducibility across $R=100$ simulations (top). Figure (b) shows the $p$-values for each IC along with the $0.05$ and $0.1$ cutoff lines.}
\label{figure_pval_50groupof5}
\end{center}
\end{figure}

We compared the reproducible RSNs from the single subject RAICAR-N analysis to the RSN maps reported in literature \citep{Beckmann:2005}. Please see Figure \ref{figure_top15maps_summary}.

\begin{figure}[htbp]
\begin{center}
\includegraphics[width=4.7in]{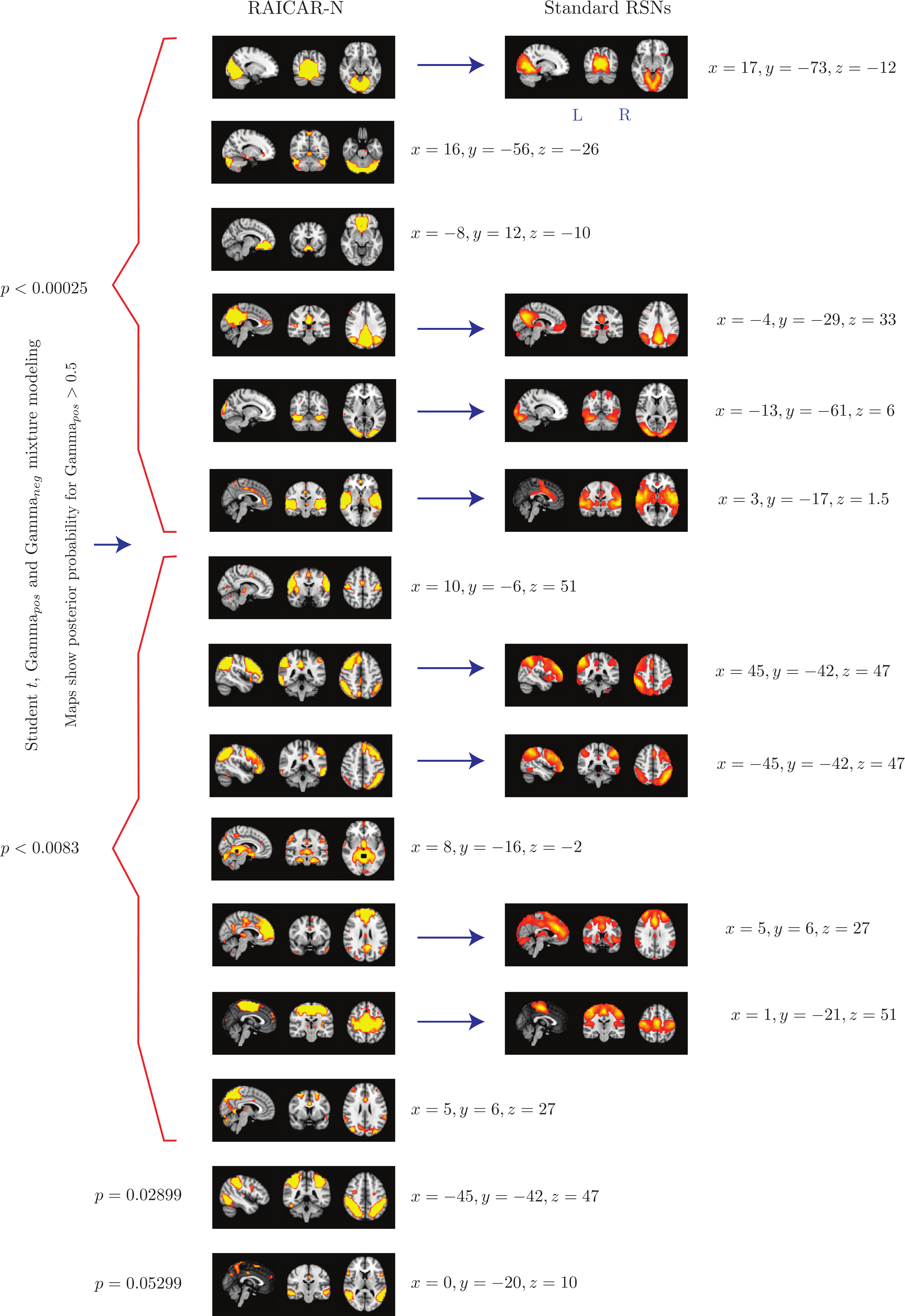}
\caption{The top 15 "reproducible" ICs from $K=50$ runs of $L=5$ subject group ICA RAICAR-N analysis compared with standard RSN maps reported in literature \citep{Beckmann:2005}. We are able to declare 8 "standard" RSNs as significantly reproducible at a $p$-value of $<0.05$. There are 6 other "non-standard" RSNs that can be declared as significantly reproducible at a $p$-value of $<0.05$ and 1 other "non-standard" RSN that achieves a $p$-value of $0.05299$.}
\label{figure_top15maps_summary}
\end{center}
\end{figure}

In summary, when 50 random 5 subject group ICA runs (from a population of 23 subjects) are combined using RAICAR-N: 
\begin{itemize}
\item We are able to declare 8 "standard" RSNs as significantly reproducible at a $p$-value $< 0.05$. 
\item There are 6 other "non-standard" RSNs that can be declared as significantly reproducible at a $p$-value $< 0.05$. 
\item There is 1 other "non-standard" RSN that achieves a $p$-value of 0.05299. 
\end{itemize}

\subsection{RAICAR-N on random sets of 5 subjects - 100 group ICA runs}
To promote diversity across the group ICA runs, as discussed in section \ref{choosingL}, $L = 5$ subjects were drawn at random from the group of $N = 23$ subjects and submitted to a temporal concatenation based group ICA. This process was repeated $K = 100$ times and the resulting set of 100 group ICA maps were submitted to a RAICAR-N analysis. ICA components were sorted according to their reproducibility and $p$-values were computed for each ICA component. Please see Figure \ref{figure_pval_100groupof5}.

\begin{figure}[htbp]
\begin{center}
\includegraphics[width=6.0in]{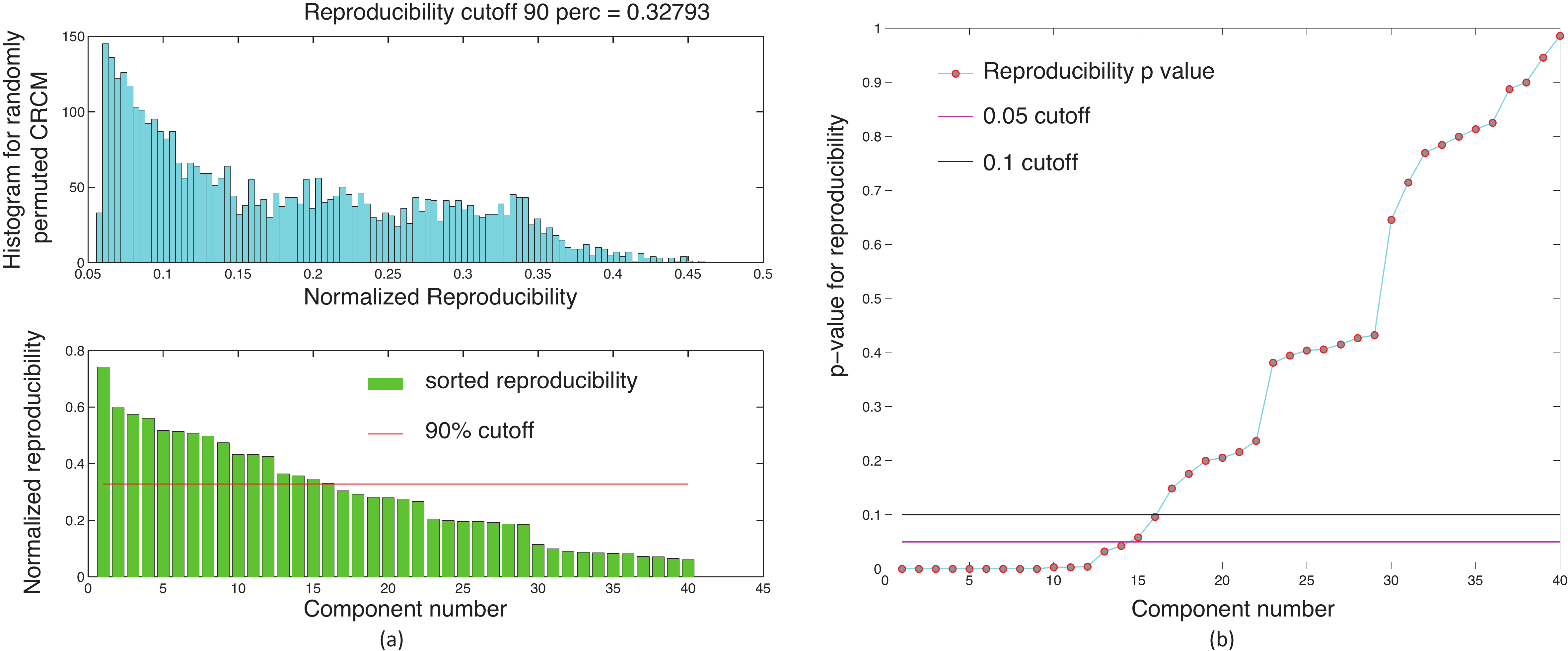}
\caption{$L=5$ subjects were randomly drawn from the set of $N=23$ subjects and submitted to a temporal concatenation based group ICA. This process was repeated $K=100$ times and the resulting ICA maps were submitted to a RAICAR-N analysis. Figure (a) shows the observed values of normalized reproducibility (bottom) as well as the "null" distribution of normalized reproducibility across $R=100$ simulations (top). Figure (b) shows the $p$-values for each IC along with the $0.05$ and $0.1$ cutoff lines.}
\label{figure_pval_100groupof5}
\end{center}
\end{figure}

We compared the reproducible RSNs from the single subject RAICAR-N analysis to the RSN maps reported in literature \citep{Beckmann:2005}. Please see Figure \ref{figure_top15maps_summary_100}.

\begin{figure}[htbp]
\begin{center}
\includegraphics[width=4.5in]{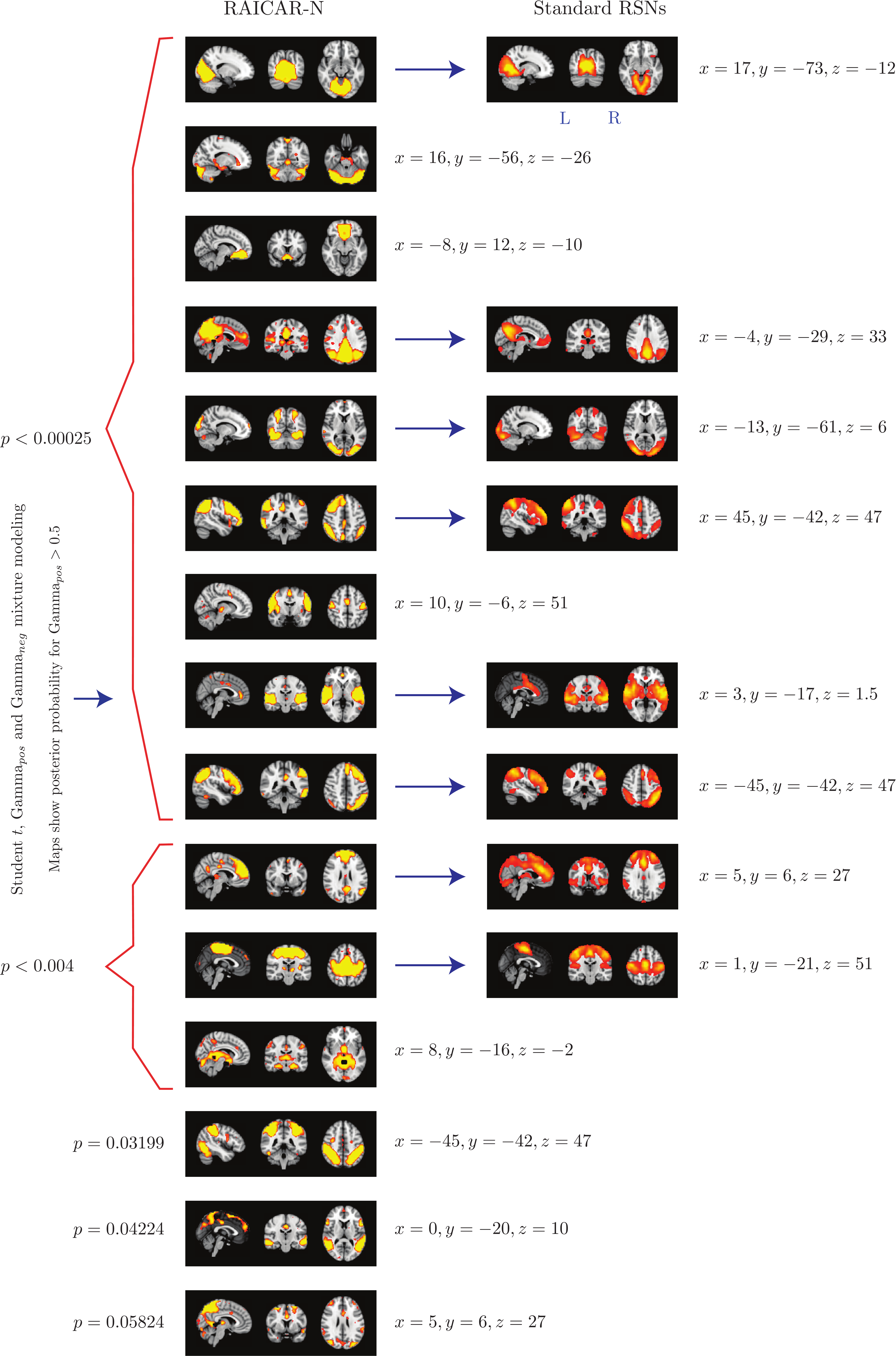}
\caption{The top 15 "reproducible" ICs from $K=100$ runs of $L=5$ subject group ICA RAICAR-N analysis compared with standard RSN maps reported in literature \citep{Beckmann:2005}. We are able to declare 8 "standard" RSNs as significantly reproducible at a $p$-value of $<0.05$. There are 6 other "non-standard" RSNs that can be declared as significantly reproducible at a $p$-value of $<0.05$ and 1 other "non-standard" RSN that achieves a $p$-value of $0.05824$.}
\label{figure_top15maps_summary_100}
\end{center}
\end{figure}

In summary, when 100 random 5 subject group ICA runs (from a population of 23 subjects) are combined using RAICAR-N: 
\begin{itemize}
\item We are able to declare 8 "standard" RSNs as significantly reproducible at a $p$-value $< 0.05$. 
\item There are 6 other "non-standard" RSNs that can be declared as significantly reproducible at a $p$-value $< 0.05$. 
\item There is 1 other "non-standard" RSN that achieves a $p$-value of 0.05824. 
\end{itemize}

\newpage

\section{Group comparison of ICA results}\label{raicar-n-across-groups}
In this section, we summarize the main approaches for group analysis of ICA results which can be broadly classified into two categories: (1) Approaches based on a single ICA run or no ICA run and (2) Approaches based on multiple ICA runs. To make things concrete, suppose we have two groups of subjects $A$ and $B$.

\subsection{Approaches based on a single group ICA run or no ICA run}

The main idea in these approaches is to use the results of a group ICA using all subjects to derive subject specific spatial maps for group comparison.
A typical sequence of steps is as follows:

\begin{enumerate}

\item The first step involves extraction of a set of template IC maps or a set of template mixing matrix time courses. This can be accomplished using two techniques:

	\begin{enumerate}
	
		\item \textbf{Group ICA based template IC maps or time courses:}\\
		A temporal concatenation based group ICA is run using data from all subjects in group $A$ and $B$. This usually involves two PCA data reductions. The first reduction is based on 
		a subject wise PCA decomposition \citep{Calhoun:2001} or an average PCA decomposition \citep{Beckmann:2005} as discussed in section \ref{group_ica}. The next reduction is based on PCA reduced temporally concatenated data. Subsequently, the group ICs and the dual PCA reduced mixing matrix time courses are estimated using an ICA algorithm. 
	
		\item \textbf{User supplied set of template IC maps:}\\
		The user supplies a set of spatial maps, perhaps corresponding to an ICA decomposition on an independent data set.
			
	\end{enumerate}

\item The next step either uses template IC maps or time courses.

	\begin{enumerate}
		
		\item \textbf{Template time course based approach:}\\
		First, the mixing matrix is PCA back projected and partitioned into subject specific sub matrices. Next, subject specific spatial maps corresponding to the group ICs are estimated via least-squares and a second PCA back projection is used to estimate the corresponding subject specific time courses. This is the approach proposed in \citep{Calhoun:2001}, which we will refer to as the group ICA back projection approach.
				
		\item \textbf{Template IC based approach:}\\
		First, spatial multiple regression using the template ICs as regressors is used against the original data of each subject to derive subject specific time courses corresponding to each template IC. Next, a second multiple regression using the subject specific time courses is used against the original data of each subject to derive subject specific spatial maps corresponding to each template IC. This approach called "dual-regression" has been proposed by \citep{Beckmann_dreg:2009}. A similar approach called fixed average spatial ICA (FAS ICA) had also been proposed earlier in \citep{Calhoun_dreg:2004}.	Both dual-regression and FAS ICA involve the first spatial regression stage, but dual-regression also includes a second temporal regression stage.	
	\end{enumerate}

	\item Once subject specific spatial maps and time courses corresponding to group ICs have been determined, they are entered into a random effects analysis for group comparison.

\end{enumerate}

\subsubsection{Advantages of single group ICA based approaches}

\begin{enumerate}
\item Much reduced computational load compared to multiple ICA based approaches.
\item Ability to take advantage of constrained spatial IC estimation across all subjects via group ICA.
\end{enumerate}

Please see section \ref{discussion} for discussion.

\subsection{Approaches based on multiple single subject or group ICA runs}

In these approaches results of multiple ICA runs in groups $A$ and $B$ are used for a between group analysis.
A typical sequence of steps is as follows:

\begin{enumerate}

\item The first step involves:

\begin{itemize}
\item running a separate single subject ICA for all subjects from groups $A$ and $B$ (possibly with multiple runs per subject) or 
\item running a set of group ICA runs across various sets of subjects separately, with each set containing subjects either from group $A$ or group $B$
\end{itemize}

\item The next step is to establish a correspondence between the ICs within and across groups. There are two main techniques of establishing this correspondence:

	\begin{enumerate}
		\item \textbf{Template based methods:}\\
		In these approaches, the user defines a template or a spatial map containing the network of interest. Examples of templates include a spatial map of the default mode network (DMN) derived from a separate ICA analysis, a spatial map from a separate PCA analysis, or even a binary mask defining the regions of interest. The template is then used to select from each run of ICA (single subject or group ICA) in each group ($A$ and $B$), an IC that best matches the template using a predefined metric such as spatial correlation coefficient or goodness of fit (GOF) \citep{Greicius:2004}. 
		
		\item \textbf{Template free methods:}\\
		These approaches do not need a pre-defined template from the user, but instead attempt to match or cluster all ICs simultaneously within and across groups. Examples of such approaches include self organizing group ICA (sogICA, \citep{Esposito:2005}) and RAICAR \citep{Yang:2008}. Each matched component or IC cluster includes one IC from each ICA run (single subject or group ICA) in each group ($A$ and $B$). 
	\end{enumerate}
	
\item Finally, the selected ICs in template based methods or ICs from a selected IC cluster/matched component in template free methods are then entered into a random effects group analysis (with repeated measures for multiple single subject ICA runs) for between group comparison.

\end{enumerate}

\subsubsection{Advantages of multiple ICA run approaches}

\begin{enumerate}
\item They account for both algorithmic and data set variability of ICA.
\item Group comparisons happen on true ICs i.e., optimal solutions for the ICA problem.
\end{enumerate}

Please see section \ref{discussion} for discussion.

\section{Discussion}\label{discussion}

As discussed in section \ref{identifiability}, in the noisy linear ICA model with isotropic diagonal Gaussian noise co-variance, for a given true model order, the mixing matrix and the source distributions are identifiable upto permutation and scaling. However, as pointed out in section \ref{run-to-run}, various factors prevent the convergence of ICA algorithms to unique IC estimates. These factors include ICA model not being the true data generating model, approximations to mutual information used in ICA algorithms, multiple local minima in ICA contrast functions, confounding Gaussian noise as well as variability due to model order over-estimation. A practical implication of these factors is that ICA algorithms converge to different IC estimates depending on how they are initialized and on the specific data used as input to ICA. Hence, there is a need for a rigorous assessment of reproducibility or generalizability of IC estimates. A set of reproducible ICs can then be used as ICA based characteristics of a particular group of subjects.

We proposed an extension to the original RAICAR algorithm for reproducibility assessment of ICs within or across subjects. The modified algorithm called RAICAR-N builds up a "null" distribution of normalized reproducibility values under a random assignment of observed ICs across the $K$ runs. This "null" distribution is used to compute reproducibility $p$-values for each observed matched component from RAICAR. An objective cutoff such as $p < 0.05$ can be used to detect "significantly reproducible" components. This avoids subjective user decisions such as selection of the number of clusters in ICASSO or the reproducibility cutoff in RAICAR or a cutoff on intra cluster distance in sogICA.

\subsection{Results for publicly available rsfMRI data}
We applied RAICAR-N to publicly available $N=23$ subject rsfMRI data from \url{http://www.nitrc.org/}. We analyzed the data in 2 different ways:

\begin{enumerate}
\item $n_C=40$ ICs were extracted for each of the $N=23$ subjects. The $K=23$ single subject ICA runs were subjected to a RAICAR-N analysis (after registration to standard space). 

In single subject ICA based RAICAR-N analysis (see Figures \ref{figure_pval_23subj} - \ref{figure_top8maps_summary}), we are able to declare 6 out of the 8 ICs reported in \citep{Beckmann:2005} (which used group ICA) as "reproducible" (4 ICs have $p$-values $< 0.05$ and 2 ICs have $p$-values $< 0.06$). This is consistent with the 5 reproducible RSNs reported in \citep{DeLuca:2006} using single subject ICA analysis. 

\item $L=5$ subjects were randomly drawn from $N=23$ subjects to create one group of subjects which was subjected to a group ICA analysis in which $n_C=40$ components were extracted. This process was repeated $K=50$ or $100$ times and the resulting group ICA runs were subjected to a RAICAR-N analysis.

 In group ICA based RAICAR-N analysis (see Figures \ref{figure_pval_50groupof5} - \ref{figure_top15maps_summary_100}), we are able to declare all 8 components reported in \citep{Beckmann:2005} as "reproducible" (at $p < 0.05$). Some of the ICs detected as "reproducible" in the group ICA based RAICAR-N on human rsfMRI data are not shown in \citep{Beckmann:2005} but do appear in the more recent paper \citep{Smith:2009}. RAICAR-N results for $K = 50$ are almost identical to those for $K = 100$ suggesting that $K = 50$ runs of group ICA are sufficient for a RAICAR-N reproducibility analysis.

\end{enumerate}

\subsection{Single subject ICA vs Group ICA}
Based on our results, it appears that single subject ICA maps are less reproducible compared to group ICA maps as illustrated in Figures \ref{figure_pval_23subj} and \ref{figure_pval_50groupof5}. A single subject ICA based analysis is more resistant to subject specific artifacts. On the other hand, a group ICA based analysis makes the strong assumption that ICs are spatially identical across subjects. If this assumption is true, group ICA takes advantage of temporal concatenation to constrain the ICs spatially across subjects thereby reducing their variance. Hence, when there are no gross artifacts in individual rsfMRI data sets, group ICA is expected to be more sensitive for reproducible IC detection. As seen in Figures \ref{figure_top8maps_summary} and \ref{figure_top15maps_summary}, our results agree with this proposition. All ICs declared as "reproducible" in the single subject based RAICAR-N analysis continue to remain "reproducible" in the group ICA based RAICAR-N analysis. 

\subsection{How should subjects be grouped for group ICA?}
This raises the question of how the subjects should be grouped together for individual group ICA runs in preparation for RAICAR-N. If all $N$ subjects are used in all group ICA analyses then there is no diversity in the individual group ICA runs. In this case, a RAICAR-N analysis will capture algorithmic variability due to non-convexity of ICA objective function but not dataset variability. Hence, our conclusions might not be generalizable to a different set of $N$ subjects.

Another option is to randomly select $L$ subjects out of $N$ for each group ICA run and submit the resulting $K$ group ICA runs to RAICAR-N. In this case, we will account for both algorithmic and data set variability via a RAICAR-N analysis. In other words, we will be able to determine those ICs that are "reproducible" across different sets of $L$ subjects and across multiple ICA runs. A key question is: How should we choose $L$ and $K$? In section \ref{choosingL}, we proposed a simple method to determine the number of subjects $L$ to be used in a single group ICA run out of the $N$ subjects - the key idea is to form groups with enough "diversity". Multiple such group ICA runs can then be submitted to a RAICAR-N analysis for reproducibility assessment. Clearly, the larger the value of $N$, the larger the value of $L$. Hence, increasing the number of subjects $N$ in a study will allow us to make conclusions that are generalizable to a larger set of $L$ subjects. Also, conclusions generalizable to $L_1$ subjects are expected to hold for $L_2 > L_1$ subjects but not vice versa.

\subsection{RAICAR-N for group comparisons of \textit{reproducible} ICs}\label{raicar-n-across-groups}
In the present work, our focus was on enabling the selection of reproducible ICs for a given single group of subjects. However, RAICAR-N can be extended for between group analysis of reproducible components as well. Before we describe how to do so, it is useful to discuss other approaches for group analysis of RSNs described in section \ref{raicar-n-across-groups}. Suppose we have two groups of subjects $A$ and $B$.

\subsubsection{Discussion of single group ICA based approaches}

\begin{enumerate}

\item Subject specific maps corresponding to group ICA maps derived using ICA back projection or dual regression are not true ICs, i.e., they are not solutions to an ICA problem.

\item These approaches do not account for either the algorithmic or the data set variability of an ICA decomposition. The single group ICA decomposition will contain both reproducible and non-reproducible ICs, but there is no systematic way to differentiate between the two.

\item Both dual regression and ICA back projection using data derived IC templates are circular analyses. First, group ICA using \textbf{all} data is used to derive template IC maps or template time courses. Next least-squares based ICA back projection or dual regression using a subset of the same data is used to derive subject specific maps and time courses corresponding to each IC. Thus model $1$ (group ICA) on data $\mathcal{D}$ is used to learn an assumption $\mathcal{A}$ (template IC maps or template time courses) that is then used to fit model $2$ (dual regression or ICA back projection) on a subset of the \textbf{same} data $\mathcal{D}$. This is circular analysis \citep{Kriegeskorte:2009, Vul:2010}.

It is easy to avoid circular analysis in a dual regression approach via cross-validation. For example, one can split the groups $A$ and $B$ into two random parts, a "training" set and a "test" set. First, the "training" set can be used to derive template IC maps using group ICA. Next, the "training" set based template IC maps can be used as spatial regressors for dual regression on the "test" set. Alternatively, the template ICs for dual regression can also come from a separate ICA decomposition on a independent data set unrelated to groups $A$ and $B$ such as human rsfMRI data. This train/test approach cleanly avoids the circular analysis problem. It is not clear how to use cross-validation for an ICA back projection approach since template time courses cannot be assumed to remain the same across ICA decompositions.

\item Subject specific structured noise is quite variable in terms of its spatial structure. Hence, a group ICA analysis cannot easily model or account for subject specific structured noise via group level ICs. Consequently, subject specific spatial maps in ICA back projection or dual regression will have a noise component that is purely driven by the amount of structured noise in individual subjects. On the other hand, a single subject ICA based analysis can accurately model subject specific structured noise via single subject ICs.
\end{enumerate}

\subsubsection{Discussion of multiple ICA run approaches}

\begin{enumerate}

\item \citep{Zuo:2010} report that using different sets of template ICs in template based methods using spatial correlation such as \citep{Harrison:2008} can result in the selection of different ICs in individual ICA runs. This is not surprising since IC correspondence derived from template based methods \textbf{does} depend on the particular template used. This is similar to a seed based correlation analysis being dependent on the particular seed ROI used. It is worth noting that template free approaches such as sogICA and RAICAR do not rely on any template.

\item \citep{Cole:2010} state that individual runs across subjects (or groups of subjects) can be quite variable in terms of the spatial structure of the estimated ICs. For example, \citep{Cole:2010} point out that an IC might be apparently split into two sub-components in some subjects but not others. The real problem is that the same model order could lead to over-fitting in some subjects (or groups of subjects) but not in others. Hence, the observed differences in a group comparison might be biased by the unknown difference in the amount of over-fitting across groups $A$ and $B$. 

As described in \ref{run-to-run}, over-fitting can lead to the phenomenon of component "splitting" in ICA. This is not limited to single subject ICA but can also occur in group ICA. For instance, \citep{Zuo:2010} report the "default mode" network as split into three sub networks using group ICA and note that component "splitting" can also reflect functional segregation or hierarchy within a particular IC and is not necessarily a consequence of model order overestimation in every case.

Over-fitting can be correctly accounted for by a reproducibility analysis. This is because we expect the real and stable non-Gaussian sources to be reproducible across multiple ICA runs (algorithmic variability) and across different subjects or groups of subjects (data set variability). 

\end{enumerate}

If we want the results of a between group ICA analysis to be generalizable to an independent group of subjects then we must account for both the algorithmic and data variability of ICA. We propose to modify RAICAR-N for enabling between group comparisons of "reproducible" ICs as follows:

\begin{enumerate}
\item Enter multiple within and across subject (or within and across sets of subjects) ICA runs for groups $A$ and $B$ into a RAICAR analysis. Perform the RAICAR component matching process across groups $A$ and $B$.
\item Use RAICAR-N to compute reproducibility $p$-values separately for group $A$ and $B$ for each matched component across groups $A$ and $B$. 

\item Only ICs that are \textit{separately} reproducible in both groups $A$ and $B$ and that are \textit{maximally} similar to each other are used for between group comparisons.
\end{enumerate}

To summarize, a RAICAR-N analysis:
\begin{itemize}
\item[\Checkmark]  can be applied for "reproducible" component detection either within or across subjects in any component based analysis - not necessarily ICA. For instance, a set of PCA maps across subjects can be submitted to a RAICAR-N analysis.

\item[\Checkmark]  is simple to implement and accounts for both algorithmic and data set variability of an ICA decomposition.

\item[\Checkmark] avoids any user decisions except the final $p$-value cutoff which can be objectively pre-set at standard values such as $0.05$.

\item[\Checkmark] can be extended to enable comparisons of reproducible ICs between groups $A$ and $B$.
\end{itemize}

\section{Conclusions}\label{conclusion}
Multiple group ICA runs using groups of subjects with enough "diversity" can be used to account for the run-to-run variability in ICA algorithms both due to the non-convex ICA objective function as well as across subjects data variability. These group ICA runs can be subjected to a RAICAR-N "reproducibility" analysis. RAICAR-N enables the objective detection of "reproducible components" in any component based analysis of fMRI data such as ICA and can also be used for a between group comparison of "reproducible" ICs.

\section*{Acknowledgements}
We gratefully acknowledge financial support from the Pain and Analgesia Imaging and Neuroscience (P.A.I.N) group, McLean Hospital, Harvard Medical School, Belmont MA, USA under the grants K24NS064050 (DB) and R01NS065051 (DB). We would also like to thank Dr. Christian Beckmann for making the IC image files from his 2005 paper \citep{Beckmann:2005} available to us.

\bibliographystyle{plainnat}
\bibliography{bibliography_raicar_N_arxiv.bib}

\section{Figure Legends}
\addtocounter{figure}{-13}
\begin{figure*}[htbp]
\begin{center}
\includegraphics[width=4.5in]{figures_pdf/figure1.pdf}
\caption{Figure illustrates the variation in normalized reproducibility from RAICAR depending on whether the input to RAICAR is (a) Multiple ICA runs on single subject data or (b) Multiple ICA runs across subjects. Notice that the normalized reproducibility is much lower for across subjects analysis compared to within subject analysis.}
\label{figure1}
\end{center}
\end{figure*}

\begin{figure}[htbp]
\begin{center}
\includegraphics[width=6.2in]{figures_pdf/figure2.pdf}
\caption{Pictorial depiction of the original RAICAR algorithm \cite{Yang:2008}. The ICA algorithm is run $K$ times with each run producing $n_C$ ICs. $\matr{G}$ is a $K \times K$ block matrix with elements $\matr{G}(l,m) = \matr{G}_{lm}$ where $\matr{G}_{lm}$ is the $n_C \times n_C$ absolute spatial cross-correlation matrix between ICs from runs $l$ and $m$. The numbered green circles indicate the sequence of steps in applying RAICAR to a given data set. Our definition of normalized reproducibility in box 7 averages un-thresholded correlation coefficients thereby avoiding the selection of a correlation coefficient threshold prior to averaging.}
\label{figure2}
\end{center}
\end{figure}

\begin{figure}[htbp]
\begin{center}
\includegraphics[width=4.1in]{figures_pdf/figure3.pdf}
\caption{Pictorial depiction of the process for generating a "null" distribution in RAICAR-N. Our "null" hypothesis is: "$\mathbf{H_0}$: None of the ICs are reproducible. Hence, we can randomly label IC $i$ from run $l$ as IC $d$ from run $s$". Therefore we randomly split the $K n_C$ ICs across $K$ runs into $K$ parts and run the RAICAR algorithm on each set of randomly split ICs. This gives us a set of "null" reproducibility values which can be used to compute $p$-values for the observed reproducibility of ICs in the original RAICAR run. The green circles indicate the sequence of steps for generating the "null" distribution after the steps in Figure \ref{figure2}.}
\label{figure3}
\end{center}
\end{figure}

\begin{figure}[htbp]
\begin{center}
\includegraphics[width=6.3in]{figures_pdf/figure4.pdf}
\caption{Flowchart for a group ICA based RAICAR-N analysis. The $N$ single subject data sets are first pre-processed and subsequently bootstrapped to create $K$ groups, each group containing $L$ distinct subjects. Each group of $L$ subjects is submitted to a temporal concatenation group ICA analysis. The resulting IC maps (either raw ICs or ICs scaled by noise standard deviation) are subjected to a RAICAR analysis. The cross-realization cross correlation matrix (CRCM) is randomly permuted multiple times: $\matr{G} \rightarrow \matr{G}(\vect{g},\vect{g})$ where $\vect{g}$ is a random permutation of integers from $1,\ldots,K n_C$. The permuted CRCMs are subjected to a RAICAR analysis to generate a realization of reproducibility values under the "null" hypothesis. The computed "null" distribution of reproducibility values is used to assign $p$ values to the observed reproducibility of the original RAICAR run. Finally, reproducible ICs are averaged using a random effects analysis and the resulting $t$-statistic images are subjected to Gamma$_{neg}$, Student $t$ and Gamma$_{pos}$ mixture modeling.}
\label{figure4}
\end{center}
\end{figure}

\begin{figure}[htbp]
\begin{center}
\includegraphics[width=3.0in]{figures_pdf/figure5.pdf}
\caption{Figure shows a plot of $P_{XY}(L)$ vs $L$ for $N = 23$ in \textcolor{blue}{blue}. The \textcolor{red}{red} line shows the $\alpha_{max} = 0.05$ cutoff. The largest value of $L$ for which $P_{XY}(L) \le 0.05$ is $L = 5$.}
\label{figure5}
\end{center}
\end{figure}

\begin{figure}[htbp]
\begin{center}
\includegraphics[width=6.0in]{figures_pdf/figure6.pdf}
\caption{Examples of displaying non-Gaussian spatial structure using a Student $t$, Gamma$_{pos}$ and Gamma$_{neg}$ mixture model. Notice how the Gamma$_{neg}$ density is driven to near $0$ class fraction in the absence of significant negative non-Gaussian structure.}
\label{figure6}
\end{center}
\end{figure}

\begin{figure}[htbp]
\begin{center}
\includegraphics[width=6.0in]{figures_pdf/figure7.pdf}
\caption{$p$-value cutoffs for within and across single subject analysis using RAICAR-N. This figure illustrates the intuitive fact that within subject ICA runs are much more reproducible compared to across subject ICA runs.}
\label{figure7}
\end{center}
\end{figure}

\begin{figure}[htbp]
\begin{center}
\includegraphics[width=6.0in]{figures_pdf/figure8.pdf}
\caption{Single subject rsfMRI ICA runs across 23 subjects were combined using a RAICAR-N analysis. Figure (a) shows the observed values of normalized reproducibility (bottom) as well as the "null" distribution of normalized reproducibility across $R=100$ simulations (top). Figure (b) shows the $p$-values for each IC along with the $0.05$ and $0.1$ cutoff lines.}
\label{figure_pval_23subj}
\end{center}
\end{figure}

\begin{figure}[htbp]
\begin{center}
\includegraphics[width=5.8in]{figures_pdf/figure9.pdf}
\caption{The top 8 "reproducible" ICs from a RAICAR-N analysis on single subject ICA runs compared with standard RSN maps reported in literature \cite{Beckmann:2005}. We are able to declare 4 "standard" RSNs as significantly reproducible at a $p$-value $<0.05$. There are 2 other "standard" RSNs that achieve a reproducibility $p$-value between $0.05$ and $0.06$ as well as 2 "non-standard" RSNs that achieve $p$-values of $0.0125$ and $0.05699$ respectively. We also could not find 2 of the published RSNs in \cite{Beckmann:2005} as reproducible in single subject ICA runs.}
\label{figure_top8maps_summary}
\end{center}
\end{figure}

\begin{figure}[htbp]
\begin{center}
\includegraphics[width=6.0in]{figures_pdf/figure10.pdf}
\caption{$L=5$ subjects were randomly drawn from the set of $N=23$ subjects and submitted to a temporal concatenation based group ICA. This process was repeated $K=50$ times and the resulting ICA maps were submitted to a RAICAR-N analysis. Figure (a) shows the observed values of normalized reproducibility (bottom) as well as the "null" distribution of normalized reproducibility across $R=100$ simulations (top). Figure (b) shows the $p$-values for each IC along with the $0.05$ and $0.1$ cutoff lines.}
\label{figure_pval_50groupof5}
\end{center}
\end{figure}

\begin{figure}[htbp]
\begin{center}
\includegraphics[width=4.7in]{figures_pdf/figure11.pdf}
\caption{The top 15 "reproducible" ICs from $K=50$ runs of $L=5$ subject group ICA RAICAR-N analysis compared with standard RSN maps reported in literature \cite{Beckmann:2005}. We are able to declare 8 "standard" RSNs as significantly reproducible at a $p$-value of $<0.05$. There are 6 other "non-standard" RSNs that can be declared as significantly reproducible at a $p$-value of $<0.05$ and 1 other "non-standard" RSN that achieves a $p$-value of $0.05299$.}
\label{figure_top15maps_summary}
\end{center}
\end{figure}

\begin{figure}[htbp]
\begin{center}
\includegraphics[width=6.0in]{figures_pdf/figure12.pdf}
\caption{$L=5$ subjects were randomly drawn from the set of $N=23$ subjects and submitted to a temporal concatenation based group ICA. This process was repeated $K=100$ times and the resulting ICA maps were submitted to a RAICAR-N analysis. Figure (a) shows the observed values of normalized reproducibility (bottom) as well as the "null" distribution of normalized reproducibility across $R=100$ simulations (top). Figure (b) shows the $p$-values for each IC along with the $0.05$ and $0.1$ cutoff lines.}
\label{figure_pval_100groupof5}
\end{center}
\end{figure}

\begin{figure}[htbp]
\begin{center}
\includegraphics[width=4.5in]{figures_pdf/figure13.pdf}
\caption{The top 15 "reproducible" ICs from $K=100$ runs of $L=5$ subject group ICA RAICAR-N analysis compared with standard RSN maps reported in literature \cite{Beckmann:2005}. We are able to declare 8 "standard" RSNs as significantly reproducible at a $p$-value of $<0.05$. There are 6 other "non-standard" RSNs that can be declared as significantly reproducible at a $p$-value of $<0.05$ and 1 other "non-standard" RSN that achieves a $p$-value of $0.05824$.}
\label{figure_top15maps_summary_100}
\end{center}
\end{figure}

\end{document}